\newtheorem{theorem}{Theorem}
\newtheorem{lemma}[theorem]{Lemma}
\newtheorem{corollary}[theorem]{Corollary}
\title{Consensus Game Acceptors and Iterated Transductions}
\author[Dietmar Berwanger \and Marie van den Bogaard]{\smaller Dietmar Berwanger \and Marie van den Bogaard}
\address{{\smaller LSV, CNRS and Universit\'e Paris-Saclay, France}}
\email{dwb@lsv.fr (Dietmar Berwanger, corresponding author)}
\newtheorem{proposition}{Proposition}[section]
\newcommand{\PSPACE}{\textsc{PSpace}}
\newcommand*{\sBox}{\mbox{\scriptsize $\Box$}}
\newcommand*{\acc}{\mathrm{acc}}
\newcommand*{\rej}{\mathrm{rej}}
\newcommand*{\calA}{\mathcal{A}}
\newcommand*{\calS}{\mathcal{S}}
\newcommand*{\calB}{\mathcal{B}}
\newcommand*{\calR}{\mathcal{R}}
\renewcommand*{\Lambda}{A}
\newcommand*{\DA}{D_{{\kern-1.5pt}\Lambda}}
\newcommand*{\DGamma}{D_{{\kern0pt}\Gamma}}
\newcommand*{\noL}{\boxed{\scriptstyle (}}
\newcommand*{\noR}{\boxed{\scriptstyle )}}
\newcommand*{\noLL}{\boxed{\scriptstyle [}}
\newcommand*{\noRR}{\boxed{\scriptstyle ]}}
\newcommand*{\Boxed}[1]{\boxed{\scriptstyle #1}}
\def\draftnote{\today\quad\currenttime\quad IJFCS: DLT 2016 \qquad\jobname}%
\begin{document}

\thispagestyle{empty}






\setlength{\textfloatsep}{5pt}

\begin{abstract}

We study a game for recognising formal languages, 
in which two players with imperfect information 
need to coordinate on a common decision, given private
input words correlated by a finite graph.
The players have a joint objective to avoid an inadmissible
decision, in spite of the uncertainty induced by the input.

We show that the acceptor model based on consensus games 
characterises context-sensitive languages. 
Further, we describe the expressiveness of these games in terms of 
iterated synchronous transductions 
and identify a subclass that characterises context-free languages. 

\end{abstract}

\maketitle


\section{Introduction}

The idea of viewing computation as an interactive process has been at
the origin of many enlightening developments over the past three decades.
With the concept of alternation, 
introduced around 1980 by Chandra and Stockmeyer, and independently
by Kozen~\cite{ChandraKS81}, computation steps are attributed to
conflicting players seeking to reach or avoid certain outcome
states.
This approach relies on determined
games with perfect information, and it led
to important and elegant results, particularly in automata theory. 
Around the same time, Peterson and Reif~\cite{PetersonRei79} 
initiated a study on computation via games
with imperfect information, involving teams of players. 
This setting turned out to be highly expressive, but also 
overwhelmingly difficult to
comprehend. (See~\cite{AzharPetRei01,HearnDem09}, for more
recent accounts.)

In this paper, we propose a game model of a language acceptor
based on coordination games between two 
players with imperfect information. Compared to the model
of Reif and Peterson, our setting is utterly simple: the games are
played on a finite graph, plays are of finite duration, they
involve only one yes-or-no decision, and the players have no
means to communicate. Moreover, they are bound to take their decisions
in consensus. Given an input word that may yield different observations to each
of the players, they have to settle simultaneously and
independently on a common decision, otherwise they lose. 

Consensus game acceptors arise as a 
particular case of
coordination games with perfect recall, also described as 
multiplayer concurrent games or
synchronous distributed games with incomplete information 
in the computer-science literature.
Our motivation for studying the acceptor model is to obtain
lower bounds on the complexity of basic computational problems 
on coordination games with imperfect information, specifically
(1) solvability: whether a
winning strategy exists for a given game, and (2) implementability: 
which computational resources are needed to implement a winning
strategy, if any exists. 

Without the restrictions to consensus and to a single decision per play, 
the solvability problem for
coordination games with safety winning conditions is known to be
undecidable~\cite{PetersonRei79,PnueliRos90}. 
Furthermore, Janin~\cite{Janin07} points out that there exist two-player safety games 
that admit a winning strategy but none that can be implemented by a Turing
machine. 

Our first result establishes a
correspondence between context-sensitive languages and
consensus games: 
We prove that, for every context-sensitive language~$L$, there exists
a solvable consensus game in which every winning strategy extends the
characteristic function of~$L$, and conversely, that
every solvable consensus game admits a winning strategy 
characterised by a context-sensitive language. 

As a second result, we characterise winning strategies for consensus
games in terms of iterated transductions of the (synchronous rational) 
relation between the observations of players. This allows us to  
identify a subclass of games that corresponds to
context-free languages. Although it is still undecidable whether a game of
the class admits a winning strategy, we 
can effectively construct optimal strategies 
implemented by push-down automata. 

The results provide insight on the inherent complexity of 
coordination in games with imperfect information. 
With regard to the basic problem of 
agreement on a simultaneous action, 
they substantiate the assertion
that ``optimality requires computing common knowledge'' 
put forward by   
Dwork and Moses in their analysis of Byzantine 
agreement in distributed systems~\cite{DworkMos90}. 
Indeed, the constraints induced by our acceptor model 
can be reproduced in virtually any kind of
games with imperfect information and plays of unbounded length,
with the consequence that 
implementing optimal strategies amounts to 
deciding the transitive closure of  
the transduction induced by the game graph.

\subsection*{Acknowledgements} 
This work was partially supported by CASSTING Project of the 
European Union Seventh Framework Programme.
The current paper extends a preliminary report~\cite{BerwangerB15} presented at the Conference on Developments in Language Theory (DLT 2015).

\section{Preliminaries}

For standard background on formal language theory, in particular
context-sensitive
languages, we refer to Chapters~3 and~4 of the handbook~\cite{HFL1}. 
We use the characterisation of context-sensitive
languages in terms of nondeterministic linear-bounded automata given by
Kuroda~\cite{Kuroda64}, and the following well-known results from the same 
article: (1) For a fixed context-sensitive language~$L$ over an
alphabet~$\Sigma$,  the problem whether a given word $w \in \Sigma^*$
belongs to~$L$ is $\PSPACE$-hard. (2) The problem of determining whether a
given context-sensitive language represented by a linear-bounded
automaton contains any non-empty word is undecidable. 

\subsection{Consensus game acceptors}

Consensus acceptors are games between two cooperating players $1$ and $2$, 
and a passive agent called Input.  
Given a finite \emph{observation alphabet} $\Gamma$ common to both players,
a~\emph{consensus game acceptor} $G = (V, E, (\beta^1, \beta^2), v_0, \Omega )$ 
is described by a finite set $V$ of \emph{states}, 
a \emph{transition} relation $E \subseteq V \times V$,
and a pair of \emph{observation} functions $\beta^i: V \to \Gamma$
that label every state with an observation, for each player~$i = 1,2$.  
There is a distinguished \emph{initial} state $v_0 \in V$ with no
incoming transition. 
States with no outgoing transitions are called \emph{final}
states; the admissibility condition $\Omega: V \to \mathcal{P}(
\{ 0, 1 \})$ maps every final state~$v \in V$ 
to a nonempty subset of admissible decisions $\Omega( v ) \subseteq \{
0, 1\}$. The observations at the initial and the final states do not
matter, we assume that they correspond to a special symbol
$\# \in \Gamma$ for both players.

The game is played as follows: Input chooses a finite path
$\pi = v_0 \, v_1 \, \dots \, v_{n + 1}$ in $G$ 
from the initial state~$v_0$, following
transitions
$(v_{\ell}, v_{\ell + 1}) \in E$, for all 
$\ell \le n$, to a final state~$v_{n+1}$. 
Then, each player~$i$ receives a private sequence of observations
$\beta^i( \pi ) := \beta^i (v_1) \, \beta^i (v_2)\, \dots\, \beta^i(v_n)$
and is asked to take a \emph{decision} $a^i \in \{ 0, 1 \}$, 
independently and simultaneously. 
The players win if they agree on an admissible decision, that is, 
$a^1 = a^2 \in \Omega( v_{n+1})$; otherwise they lose. 
Without risk of confusion we sometimes write $\Omega( \pi )$ for
$\Omega( v_{n+1} )$. 
 
We say that 
two plays $\pi, \pi'$ are \emph{indistinguishable} to player~$i$, and write 
$\pi \sim^i \pi'$, if $\beta^i(\pi) = \beta^i(\pi')$. 
This is an equivalence relation, and its classes,
called the \emph{information sets} of Player~$i$, 
correspond to observation sequences $\beta^i( \pi )$. 
A \emph{strategy} for Player~$i$ is a mapping $s^i: V^* \to \{ 0, 1 \}$
from plays $\pi$ to decisions $s^i( \pi ) \in \{0, 1\}$
such that  $s^i ( \pi ) = s^i( \pi')$, 
for any pair~$\pi \sim^i \pi'$ 
of indistinguishable plays.
A joint strategy is a pair $s = (s^1, s^2)$; it is \emph{winning}, if 
$s^1( \pi ) = s^2( \pi ) \in \Omega( \pi )$, for all plays $\pi$. 
In this case, the components $s^1$ and $s^2$ are equal, 
and we use the term \emph{winning strategy} to refer to the
joint strategy
or either of its components.
Finally, a game is \emph{solvable}, if there exists a (joint) winning
strategy.

In the terminology of distributed systems, 
consensus game acceptors correspond to \emph{synchronous} systems 
with \emph{perfect recall} and \emph{known initial state}. 
They are a particular case of distributed games
with safety objectives \cite{MohalikWal2003}, coordination games with
imperfect information~\cite{Berwanger10}, or multi-player concurrent
games~\cite{AlurHenKup02}. Whenever we refer to games
in the following, we mean consensus game acceptors. 

\subsection{Strategies and knowledge}
We say that two plays $\pi$ and $\pi'$
are \emph{connected}, and write $\pi \sim^* \pi'$, 
if there exists a sequence of  
plays $\pi_1, \dots, \pi_{k}$ such that 
$\pi \sim^{1} \pi_1 \sim^{2} \dots \sim^{1} \pi_{k}  \sim^{2} \pi'.$
Then, a mapping $f: V^* \to \{0, 1\}$ from plays to decisions 
is a strategy that satisfies the consensus condition 
if, and only if, $f( \pi ) = f( \pi' )$, for
all $\pi \sim^* \pi'$. In terms of distributed knowledge, 
this means that the decisions have to be based on events
that are
common knowledge among the players at every play. 
(For an introduction to knowledge in distributed systems, 
see Chapters 10 -- 11 in the book of Fagin, Halpern, Moses, and Vardi~\cite{FaginHMV95}.)
Such a consensus strategy --- or, more precisely, the pair
$(f,f)$--- 
may still fail, due to prescribing
inadmissible decisions. We say that a decision $a \in \{0, 1\}$
is \emph{safe} at a play $\pi$ if $a \in \Omega( \pi' )$, for all
$\pi' \sim^* \pi$. Then, a consensus strategy $f$ is
winning, if and only if, it prescribes a safe decision $f( \pi )$, 
for every play $\pi$.

It is sometimes convenient to represent a strategy for a player~$i$ 
as a function $f^i: \Gamma^* \to \{ 0, 1 \}$. Every such function 
describes a valid strategy, 
because observation sequences identify 
information sets; we refer to an \emph{observation-based} strategy in contrast to the
\emph{state-based} representation $s^i: V^* \to \{0, 1\}$.
Note that the components of a joint winning strategy need no longer
be equal in the
observation-based representation. 
However, once the
strategy for one player is fixed, the strategy of the other player is
determined by the consensus condition, so there is no risk of
confusion in speaking of a winning strategy rather than
a joint strategy pair.

As an example, consider the game depicted in
Figure~\ref{fig:game-cover}, with observation 
alphabet $\Gamma=\{ a, b, \triangleleft, \triangleright,
\mbox{\scriptsize $\Box$}\}$. 
States $v$ at which the two players receive different observations are
split, with $\beta^1( v )$ written in the upper part and $\beta^2(
v )$ in the lower part; states at which the players receive the same
observation carry only one symbol. 
The admissible decisions at final states 
are indicated on the outgoing arrow. 
Notice that upon receiving the
observation sequence $a^2b^2$, for instance,
the first player is constrained to choose decision $1$, due to the
following sequence of indistinguishable plays that leads
to a play where deciding $0$ is not admissible.
\begin{align*}
\begin{pmatrix}
  a, a\\
  a, \triangleleft\\
  b, \triangleright\\
  b, b
\end{pmatrix} 
\sim^2
\begin{pmatrix}
  a, a\\
  \triangleleft, \triangleleft\\
  \triangleright, \triangleright\\
  b, b\\
\end{pmatrix} 
\sim^1
\begin{pmatrix} 
  a, \triangleleft\\
  \triangleleft, \triangleright\\
  \triangleright, \triangleleft\\
  b, \triangleright
\end{pmatrix}
\sim^2 
\begin{pmatrix}
  \triangleleft, \triangleleft \\
  \triangleright, \triangleright \\
  \triangleleft, \triangleleft \\
  \triangleright, \triangleright \\
\end{pmatrix}
\sim^1 
\begin{pmatrix}
  \triangleleft, \sBox\\ 
  \triangleright, \sBox \\
  \triangleleft, \sBox \\
  \triangleright, \sBox 
\end{pmatrix}
\sim^2 
\begin{pmatrix}
  \sBox, \sBox\\ 
  \sBox, \sBox \\
  \sBox, \sBox \\
  \sBox, \sBox 
\end{pmatrix}
\end{align*}
In contrast, decision~$0$ may be safe when Player~$1$ receives input
$a^3b^2$, for instance. Actually, the strategy $s^1(w )$ that
prescribes~$1$ if, and only if, $w \in \{a^nb^n~|~n \in
\mathbb{N}\}$ determines a joint winning strategy. 
Next, we shall make the relation between games and languages more precise.

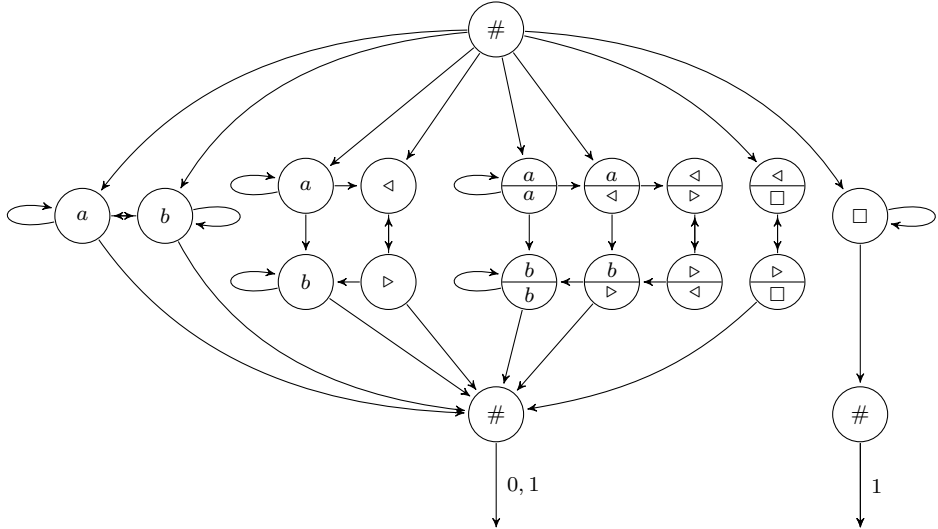
\begin{figure}[t]
\tikzset{every picture/.style={node distance=5em,->,>=stealth',shorten >=1pt,auto, font=\scriptsize}} 
\tikzset{initial text=, initial distance=1em} 
\tikzset{accepting by double} 
\tikzstyle{every state}=[minimum size=1.9em, inner sep=1.8pt, font=\scriptsize]

\begin{center}
   
      \begin{tikzpicture}[xscale=1.1,yscale=.8]
      
        \node[state]                (0)     at (5,8.4)         {$\#$};

        \node[state]              (A01) at (2.7,5.8) {$a$};
        \node[state]              (A11) at (3.7,5.8) {$\triangleleft$};
        \node[state]              (c01) at (0,5.3) {$a$};
         \node[state]              (c02) at (1,5.3) {$b$};
        
        \node[state with output]             (A20) at (5.4,5.8) {$a$ \nodepart{lower} $a$}; 
        \node[state with output]             (A21) at (6.4,5.8) {$a$ \nodepart{lower} $\triangleleft$};  
        \node[state with output]             (A22) at (7.4,5.8) {$\triangleleft$ \nodepart{lower} $\triangleright$};

        \node[state with output]            (A3) at (8.4,5.8) {$\triangleleft$ \nodepart{lower} {\tiny $\Box$}};
        
       \node[state]              (C00) at (9.4,5.3) {{\tiny $\Box$}};

        \node[state]              (B11) at (3.7,4.2) {$\triangleright$};
        \node[state]              (B01) at (2.7,4.2) {$b$};

        \node[state with output]             (B20) at (5.4,4.2) {$b$ \nodepart{lower} $b$};
        \node[state with output]             (B21) at (6.4,4.2) {$b$ \nodepart{lower} $\triangleright$};
        \node[state with output]              (B22) at (7.4,4.2) {$\triangleright$  \nodepart{lower}  $\triangleleft$};
        
         \node[state with output]            (B3) at (8.4,4.2) {$\triangleright$ \nodepart{lower} {\tiny $\Box$}};

        \node[state]                (BB) at (5,2) {$\#$};   
        \node[state, draw=none]     (Z) [below of=BB] {};
        \node[state]              (B00) at (9.4,2) {$\#$};
       \node[state, draw=none]                (AA) [below of=B00] {};

        \path (0) edge [bend right =30] (c01)
        edge [bend right =30] (c02)
        edge(A01)
        edge (A11)
        edge (A20)
        edge (A21)
        edge [bend left =20] (A3)
        edge [bend left =30] (C00)

       (A01) edge (B01)
       (A11) edge (B11)
       (A20) edge  (B20)
       (A21) edge  (B21)
       (A22) edge  (B22)
       (A3) edge (B3)

        (c01) edge [bend right =30] (BB)
        (c02) edge [bend right =30] (BB) 
        (B00) edge (AA)
        (B01) edge (BB)
        (B11) edge (A11)
        (B11) edge (BB)
        (C00) edge  (B00)
         
       (B20) edge  (BB)
       (B21) edge  (BB)
      
       (B3) edge (A3)
       (B3) edge [bend left =20] (BB);
       
        \path (B00) 
        edge node {$1$} 
        (AA);
h        
        \path (BB) 
        edge node[right] {$0,1$} 
        (Z);
        
         \path (A01) 
        edge (A11)
       
       (B11) edge (B01)
        
        (A20) edge (A21)
        (A21) edge (A22)
        (c01) edge (c02)
        
        (c02) edge (c01)
        (B22) edge (B21)
        (B21) edge (B20)
        (B22) edge (A22)
        ;
        \path (c01) edge [loop left] node[above] {} (c01);
        \path (c02) edge [loop right] node[above] {} (c02);
        \path (A01) edge [loop left] node[above] {} (A01);
       \path (C00) edge [loop right] node[above] {} (C00);
        \path (B01) edge [loop left] node[above] {} (B01);

        \path (A20) edge [loop left] node[above] {} (A20);
        \path (B20) edge [loop left] node[above] {} (B20);

      \end{tikzpicture}      
      \caption{A consensus game acceptor}
  \label{fig:game-cover}
  \end{center}
\end{figure}

\section{Describing languages by games}

We consider languages $L$ over a terminal alphabet $\Sigma$. The empty word
$\varepsilon$ is excluded from the language, and also from its complement 
$\bar{L} := (\Sigma^* \setminus \{ \varepsilon \} ) \setminus L$. 
As acceptors for such languages, 
we consider games over 
an observation alphabet $\Gamma \supseteq \Sigma$, and we assume that 
no observation sequence in $\Sigma^+$ is omitted:
for every word $w \in \Sigma^+$, 
and each player $i$,
there exists a play~$\pi$ that yields the observation sequence
$\beta^i( \pi ) = w$.

Given a consensus game acceptor $G$, 
we associate to every observation-based strategy $s \in S^1$ of the first player, the
language $L( s ) := \{ w \in \Sigma^* ~|~ s( w ) = 1 \}$.
We say that the game $G$ \emph{covers} a language $L \subseteq
\Sigma^*$, if $G$ is solvable and
\begin{itemize}
\item $L = L( s )$, for \emph{some} winning strategy $s \in S^1$, and
\item $L \subseteq L ( s )$, for \emph{every} winning strategy $s \in S^1$.
\end{itemize}
If, moreover,  $L = L( s )$ for \emph{every} winning strategy in $G$, we say that  
$G$ \emph{characterises}~$L$. In this case, all winning strategies map $L$ to 1
and $\bar{L}$ to~$0$.

Notice that every solvable game covers a unique language~$L$ over the
full observation alphabet~$\Gamma$. With respect to a given terminal alphabet $\Sigma \subseteq
\Gamma$, the covered language is hence $L\cap \Sigma^*$. 
For instance, the consensus game acceptor represented in Figure~\ref{fig:game-cover}
covers the language $\{a^nb^n~|~n \in \mathbb{N}\}$ over $\{a, b \}$.  
To characterise a language rather than covering it, we need to add
constraints that require to reject inputs. 

Given two games $G, G'$,  
we define
the \emph{union} $G \cup G'$ as the consensus game obtained
by taking the disjoint union of $G$ and $G'$ 
and identifying the initial states.
Then, winning strategies of the component games can be turned into winning
strategies of the composite game, if they agree on 
the observation sequences over the common alphabet.

\begin{lemma}\label{lem:conjunction}
Let $G$, $G'$ be two consensus games over observation alphabets $\Gamma$, $\Gamma'$. 
Then, an observation-based strategy $r$ is winning in $G \cup G'$ if, and only if, 
there exist 
observation-based winning strategies $s,s'$ in $G$, $G'$ that agree 
with~$r$ on $\Gamma^*$ and on~$\Gamma'^*$, respectively.
\end{lemma}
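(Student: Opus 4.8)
The plan is to reason throughout with the characterisation recalled just before the statement: a map is a winning strategy exactly when the decision it assigns, namely $r(\beta^1(\pi))$, depends only on the $\sim^*$-class of the play~$\pi$ (the consensus condition) and is \emph{safe}, i.e. admissible at every play $\pi' \sim^* \pi$. First I would record the structural backbone of the union. Since the identified initial state has no incoming transition, every play of $G \cup G'$ commits, after its first move, to exactly one component; hence the plays of $G \cup G'$ are the disjoint union of the plays of~$G$ and of~$G'$. On the plays of a single component the relations $\sim^1$ and $\sim^2$ agree with those of that component, whereas a connection $\pi \sim^i \pi'$ between a play $\pi$ of~$G$ and a play $\pi'$ of~$G'$ forces $\beta^i(\pi) = \beta^i(\pi') \in (\Gamma \cap \Gamma')^*$. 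This pinpoints the only place where the two components interact.

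The direction from $G \cup G'$ to the components is a restriction argument. Every $\sim^*$-chain lying inside~$G$ is also a $\sim^*$-chain of $G \cup G'$, so the connectedness class of a play of~$G$ computed inside~$G$ is contained in its class computed in $G \cup G'$. Consequently $s := r|_{\Gamma^*}$ inherits the consensus condition from~$r$, and any decision safe in $G \cup G'$ is \emph{a fortiori} safe in~$G$, since safety there involves fewer plays. Thus $s$ is winning in~$G$, symmetrically $s' := r|_{\Gamma'^*}$ is winning in~$G'$, and by construction they agree with~$r$ on $\Gamma^*$ and on~$\Gamma'^*$.

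For the converse I would take winning strategies $s, s'$ presented as the two restrictions of a common~$r$, so that $s$ and~$s'$ already coincide on $(\Gamma \cap \Gamma')^*$, and check that $r$ is a consensus strategy that is safe in $G \cup G'$. Inside each component this is exactly the hypothesis. The entire difficulty lies in the interaction, that is, in showing that a $\sim^*$-chain crossing from~$G$ to~$G'$ cannot impose two conflicting decisions. A crossing through a shared \emph{first-player} observation is harmless: there the bridging word $x \in (\Gamma \cap \Gamma')^*$ is read by Player~$1$, and on both sides the prescribed decision is $s(x) = r(x) = s'(x)$.

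The step I expect to be the main obstacle is a crossing through a shared \emph{second-player} observation $u \in (\Gamma \cap \Gamma')^*$, which may link a play of~$G$ to a play of~$G'$ whose first-player observations do \emph{not} lie in the common alphabet, so that the agreement of $s$ and~$s'$ on $(\Gamma \cap \Gamma')^*$ does not apply directly. Here I would use the consensus condition already holding inside each component to trace the second-player decision attached to~$u$ back to first-player decisions taken on observations in $(\Gamma \cap \Gamma')^*$, where $s$, $s'$ and~$r$ all coincide, thereby forcing the two components to agree on the value carried by~$u$. Once consensus across every crossing is secured, safety of~$r$ on a class of $G \cup G'$ follows at once: such a class is covered by component classes, on each of which the common decision is safe by hypothesis.
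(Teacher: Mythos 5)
Your forward direction is correct and coincides with the paper's: restricting to a component only shrinks $\sim^*$-classes, so consensus and safety are inherited by $r|_{\Gamma^*}$ and $r|_{\Gamma'^*}$. Your structural analysis of the union (plays never cross components; components interact only through indistinguishability links over $(\Gamma\cap\Gamma')^*$) is also right. The gap is exactly the step you flag as the main obstacle, and it is fatal: the second-player decision attached to a shared observation $u\in(\Gamma\cap\Gamma')^*$ can \emph{not}, in general, be traced back inside a component to first-player decisions on words of $(\Gamma\cap\Gamma')^*$. Concretely, take $\Gamma=\{a,x\}$, $\Gamma'=\{a,y\}$. Let $G$ contain a play with observation pair $(\beta^1,\beta^2)=(x,a)$ and admissible set $\{1\}$, plus plays $(a,x)$ and $(a^n,a^n)$ for $n\ge 2$ with admissible set $\{0,1\}$ (the latter only to meet the standing assumption that no observation sequence over the common alphabet is omitted); let $G'$ be the mirror game with a play $(y,a)$ of admissible set $\{0\}$ and plays $(a,y)$, $(a^n,a^n)$ with $\{0,1\}$. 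In each component every play is alone in its $\sim^*$-class, so the strategies $s,s'$ that are identically $0$ on $\{a\}^*$, with $s(x)=1$ and $s'(y)=0$, are winning, and they agree on $(\Gamma\cap\Gamma')^*$. Yet in $G\cup G'$ we have $(x,a)\sim^2(y,a)$ with disjoint admissible sets, so the union admits no winning strategy at all. The $\sim^*$-class of $(x,a)$ inside $G$ contains no play whose first-player observation lies in $\{a\}^*$ --- precisely the situation your tracing argument would have to exclude, and cannot.

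It is worth seeing how this compares with the paper's own proof, which is much shorter and never confronts your obstacle: after noting that plays stay in one component, it simply declares the glued function winning. That conclusion is sound only under a reading in which an observation-based winning strategy is a single consensus function applied by \emph{both} players to their observation sequences; then the decision carried by a shared word $u$ is $r(u)$ by definition, cross-component $\sim^2$-links impose nothing beyond the agreement hypothesis, and only per-play admissibility remains to be checked, which never leaves a component. Under the reading you adopt --- $r$ is player~1's function and player~2's is induced by consensus, which is how the paper formally introduces observation-based strategies --- the converse direction needs an additional hypothesis, for instance that every word of $(\Gamma\cap\Gamma')^*$ occurring as a player-2 observation also occurs on a play where both players observe it (a property the paper's domino-game constructions do satisfy, which is why the later applications go through). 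So your diagnosis of where the difficulty sits is sharper than the paper's treatment, but the resolution you sketch cannot be carried out from the stated hypotheses, and as proposed the converse direction remains unproven.
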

\begin{proof}
Clearly, if $r$ is a winning strategy in $G \cup G'$, 
then its restrictions $s$, $s'$ to $\Gamma^*$ and
$\Gamma'^*$ are winning strategies in $G$ and $G'$.
For the converse, let $s, s'$ be two observation-based 
winning strategies in $G$, $G'$ that agree on $( \Gamma \cap \Gamma' )^*$. 
Notice that every observation sequence in the conjunction $G \cup G'$ is 
either included in $\Gamma^*$ or in $\Gamma'^*$, 
as a play that enters $G$ at the initial state can never
reach $G'$ and vice versa.
Therefore, the function $r: (\Gamma \cup \Gamma')^* \to \{ 0, 1\}$ 
that agrees with $s$ on 
$\Gamma^*$ and with $s'$ on $\Gamma'^*$ 
determines a winning strategy in the acceptor game $G \cup G'$.
\end{proof}

Whenever a language and its
complement are covered by two consensus games, 
we can construct a new game that characterises the language.
The construction involves \emph{inverting} the decisions
in a game, that is, replacing the admissible decisions
for every final state $v \in V$ with $\Omega( v ) = \{0\}$ 
by $\Omega( v ) := \{1\}$ and vice versa; 
final states $v$ with $\Omega( v ) = \{0, 1\}$ remain unchanged.

\begin{lemma}\label{lem:cover-char}
Suppose two consensus games $G$, $G'$ 
cover a language~$L \subseteq \Sigma^*$ and its complement~$\bar{L}$, respectively. 
Let $G''$ be the game obtained
from $G'$ by inverting the admissible decisions.
Then, the game $G \cup G''$ characterises $L$.
\end{lemma}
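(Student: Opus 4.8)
The plan is to verify the two requirements of characterisation separately: that $G \cup G''$ is solvable, and that every winning strategy $r$ of $G \cup G''$ satisfies $L(r) = L$. Both parts rest on two ingredients. The first is Lemma~\ref{lem:conjunction}, which lets me pass freely between winning strategies of the union and compatible pairs of winning strategies of $G$ and $G''$. The second is a \emph{complementation principle} for the inversion of decisions: since inverting the admissibility condition leaves the observation functions---and hence the connectedness relation $\sim^*$---untouched, a consensus strategy $f : V^* \to \{0,1\}$ is winning in $G'$ if, and only if, its pointwise complement $\bar f := 1 - f$ is winning in $G''$. Indeed $\bar f$ is again a consensus strategy, and at every play $\pi$ the condition $\bar f(\pi) \in \Omega''(\pi) = \{\, 1 - a \mid a \in \Omega'(\pi)\,\}$ is equivalent to $f(\pi) \in \Omega'(\pi)$. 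In the observation-based representation this says that the winning strategies of $G''$ are exactly the complements $\bar{s'}$ of the winning strategies $s'$ of $G'$, and that $L(\bar{s'})$ and $L(s')$ are complementary within $\Sigma^+$.

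For the characterisation I would start from an arbitrary winning strategy $r$ of $G \cup G''$. By Lemma~\ref{lem:conjunction} it restricts to a winning strategy $s$ of $G$ on $\Gamma^*$ and to a winning strategy $s''$ of $G''$ on $\Gamma'^*$, so that $r$, $s$ and $s''$ all coincide on $\Sigma^+$. Since $G$ covers $L$, every winning strategy of $G$ satisfies $L \subseteq L(s)$, whence $r(w) = s(w) = 1$ for every $w \in L$. By the complementation principle $s'' = \bar{s'}$ for some winning strategy $s'$ of $G'$; as $G'$ covers $\bar L$ we have $\bar L \subseteq L(s')$, that is $s'(w) = 1$, and therefore $r(w) = s''(w) = 0$ for every $w \in \bar L$. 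Since $\Sigma^+ = L \uplus \bar L$ (the empty word being excluded throughout), this yields $L(r) = L$, which is the characterisation condition. Note that this argument uses only the values of the strategies on $\Sigma^+$, so it is insensitive to the treatment of the auxiliary observations.

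It remains to establish solvability, and here I would exhibit the two canonical witnesses supplied by the covering hypotheses: a winning strategy $s_G$ of $G$ with $L(s_G) = L$, and a winning strategy $s'_G$ of $G'$ with $L(s'_G) = \bar L$. Inverting the latter through the complementation principle gives a winning strategy $s_{G''} = \bar{s'_G}$ of $G''$ with $L(s_{G''}) = \Sigma^+ \setminus \bar L = L$. Thus $s_G$ and $s_{G''}$ both agree with the characteristic function $\chi_L$ on $\Sigma^+$. To glue them into a winning strategy of $G \cup G''$ via Lemma~\ref{lem:conjunction}, I need them to agree on the entire common alphabet $(\Gamma \cap \Gamma')^*$.

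The hard part will be precisely this last point, which is the only place where the auxiliary observation symbols intervene. On sequences realised in only one of the two games the value of the other strategy is a don't-care and may be set freely, so no conflict arises there; a genuine conflict could occur only for a sequence using a symbol of $(\Gamma \cap \Gamma') \setminus \Sigma$ that is realised in \emph{both} games with incompatible safe decisions. I would rule this out by arranging $\Gamma \cap \Gamma' = \Sigma$, which is without loss of generality: renaming the auxiliary observation symbols of $G'$ by fresh symbols disjoint from $\Gamma$ produces an isomorphic game covering the same language $\bar L$ and affects neither the construction of $G''$ nor the behaviour of the union on $\Sigma$. With $(\Gamma \cap \Gamma')^* = \Sigma^*$, the agreement of $s_G$ and $s_{G''}$ on $\Sigma^+$ (together with any common choice at $\varepsilon$) already satisfies the hypothesis of Lemma~\ref{lem:conjunction}, which then delivers a winning strategy of $G \cup G''$ and completes the proof of solvability, and hence of characterisation.
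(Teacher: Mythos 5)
Your proof is correct and takes essentially the same approach as the paper's: the same without-loss-of-generality renaming so that $\Gamma \cap \Gamma' = \Sigma$, the same gluing and restriction of strategies via Lemma~\ref{lem:conjunction}, and the same inversion argument, which you state as an explicit complementation principle where the paper uses it implicitly in both directions. The remaining differences (treating characterisation before solvability, spelling out the don't-care sequences) are purely presentational.
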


\begin{proof}
  Let $G$, $G'$ be two acceptor games 
  that cover $L$ and $\bar{L}$. Witout loss of generality, we assume that
  the observation alphabets $\Gamma$, $\Gamma'$ intersect only on 
  $\Gamma \cap \Gamma' = \Sigma$; other 
  common observations may be renamed.
  We need to show that the union $G \cup G''$ is solvable
  and that every winning strategy~$r$ of Player~$1$ satisfies $L = L( r )$.

  For the first point, let $s$,$s'$ be two observation-based 
  winning strategies for Player 1 
  in $G$,$G'$ such that $L( s ) = L$ and $L( s' ) = \bar{L}$. 
  Then, the strategy~$s''$ for~$G''$ with
  $s''( \pi ) := 1 - s'( \pi )$ for any play~$\pi$ is winning, 
  since $s'$ is winning in $G'$ and the admissible decisions are inverted.
  Moreover,~$s$ and~$s''$ agree on all sequences of observations in
  $\Sigma^*$, 
  so they admit a common extension $r$ to $G \cup G''$ 
  that determines a winning strategy, according to Lemma~\ref{lem:conjunction}. 

  For the second point, consider an
  arbitrary winning strategy $r$ in the composite game $G \cup G''$. 
  Then, the restrictions of $r$ to $\Gamma^*$ and $\Gamma'^*$ are
  winning in~$G$ 
  and~$G''$, respectively.
  This implies, for all $w \in \Sigma^*$, 
  that $r( w ) = 1$ if $w \in L$, and 
  $r(w) = 0$ otherwise: the former because $G$ covers~$L$, and the latter
  because $G'$ covers~$\bar{L}$ and the decisions in $G''$ are inverted.
  Hence, we have $L( r )=L$ for
  every winning strategy $r$ in the game $G \cup G''$, 
  which thus characterises $L$. 
\end{proof}

\subsection{Domino frontier languages}

We use domino systems as an alternative to 
encoding machine models and formal grammars~(See \cite{Boas97} for a survey.).  
A \emph{domino system} $\mathcal{D} = (D, E_h, E_v)$ 
is described by a finite set of \emph{dominoes} 
together with a horizontal and a vertical compatibility relation 
$E_h, E_v \subseteq D \times D$.
The generic domino tiling problem 
is to determine, for a given system~$\mathcal{D}$, 
whether copies of the dominoes can be arranged to
tile a given space in the discrete grid $\mathbb{Z} \times \mathbb{Z}$, 
such that any two vertically or horizontally 
adjacent dominoes are compatible. 
Here, we consider finite rectangular grids 
$Z(\ell, m) := \{ 0, \dots, \ell+1\} \times \{ 0, \dots, m \}$, 
where the first and last column, and the bottom row are distinguished as 
border areas. Then, the question is
whether there exists a \emph{tiling} 
$\tau : Z(\ell, m) \to D$ 
that assigns to every point $(x, y) \in Z(\ell, m)$ 
a domino $\tau( x, y ) \in D$
such that:
\begin{itemize}
\item if $\tau(x, y) = d$ and $\tau(x + 1, y) = d'$
then $(d, d') \in E_h$, and
\item if $\tau(x, y) = d$ and $\tau(x, y + 1) = d'$
then $(d, d') \in E_v$.
\end{itemize}

The \emph{Border-Constrained Corridor} tiling 
problem takes as input a domino system $\mathcal{D}$ with 
two distinguished border dominoes $\#$ and $\Box$, together
with a sequence $w = w_1\,w_2\, \dots\,w_\ell$ of dominoes $w_i \in D$, 
and asks whether there exists a height $m$ such that the 
rectangle $Z(\ell, m )$ 
allows a tiling~$\tau$ with $w$ in the top row, $\#$ in the first and last 
column, and $\Box$ in the bottom row:
\begin{itemize}
\item
  $\tau( i, 0 ) = w_i$, for all $i = 1, \dots, \ell$;
\item
  $\tau( 0, y ) = \tau( \ell + 1, y ) = \#$, for all $y = 0, \dots, m - 1$;
\item
  $\tau( x, m ) = \Box$, for all $x = 1, \dots, \ell$.
\end{itemize}

Domino systems can be used to recognise formal languages. 
For a domino system $\mathcal{D}$ with side and bottom 
border dominoes as above, the \emph{frontier 
language} $L( \mathcal{D} )$ is the set of words $w \in D^*$ that 
yield positive instances of the border-constrained corridor
tiling problem. 
We use the following correspondence between context-sensitive
languages and domino systems established by Latteux and Simplot.

\begin{theorem}[\cite{LatteuxSim97a,LatteuxSim97b}] \label{thm:cs-domino}
For every context-sensitive language $L \subseteq \Sigma^*$, we can
effectively construct a
domino system $\mathcal{D}$ over a set of dominoes $D \supseteq
\Sigma$ with frontier language~$L(\mathcal{D}) = L$.
\end{theorem}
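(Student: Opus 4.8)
The plan is to invoke the characterisation of context-sensitive languages by nondeterministic linear-bounded automata (Kuroda), cited in the preliminaries, and then to encode an accepting computation of such a machine directly as a corridor tiling. Fix a nondeterministic linear-bounded automaton $M$ with $L(M) = L$ that works within the space delimited by two endmarkers, so that on an input $w = w_1 \cdots w_\ell \in \Sigma^+$ every reachable configuration occupies exactly $\ell$ tape cells. I would represent a computation of $M$ as a sequence of instantaneous descriptions written one below the other: row $0$ encodes the initial configuration, each lower row the chosen successor configuration, and the width of the corridor equals the input length $\ell$. A single cell of a configuration is either a tape symbol or a tape symbol carrying the head together with the current state, so that a whole row is a legal configuration precisely when exactly one of its cells carries the head.

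The dominoes of $\mathcal{D}$ are designed so that the two compatibility relations verify, locally, that consecutive rows are linked by a transition of $M$. Since a single step of $M$ rewrites only the scanned cell and moves the head by one position while leaving the rest of the tape unchanged, it can be checked inside a window of width three; the difficulty is to reduce this to the \emph{pairwise} horizontal relation $E_h$ and the \emph{pairwise} vertical relation $E_v$ permitted by a domino system. I would resolve this by letting each domino carry, besides the content of its own cell, a prediction of the content of the cell directly below it together with a marker recording whether the head passes through, enters from the left, or leaves to the right at this position. The vertical relation $E_v$ then merely demands that the predicted content of a cell coincide with the actual content of the cell beneath it, while $E_h$ forces neighbouring cells to agree on the head-movement markers and to copy unchanged every cell not adjacent to the head. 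In this way, a block of vertically compatible dominoes whose rows are horizontally compatible is exactly a sequence of configurations in which each row is the $M$-successor of the one above.

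It then remains to impose the border constraints. The side dominoes $\#$ encode the endmarkers and are made to reproduce themselves vertically, pinning the first and last columns. For the top row I would provide, for each terminal $a \in \Sigma$, a domino named $a$ whose cell content is $a$ and which may occur only in row $0$, with the head marker placed so that row $0$ spells out the initial configuration of $M$ on the frontier word; this guarantees $D \supseteq \Sigma$ and that the frontier of any tiling is exactly the input $w$. Finally, the bottom domino $\Box$ is allowed to sit, via $E_v$, only underneath a cell of an accepting configuration, and tiles horizontally only with itself, so that a full $\Box$-row can be completed precisely when the row above represents acceptance. With these rules a tiling of $Z(\ell, m)$ having $w$ on top, $\#$ on the sides and $\Box$ on the bottom exists for some height $m$ if, and only if, $M$ has an accepting run on $w$; hence $L(\mathcal{D}) = L(M) = L$, and the construction is effective since all dominoes and both relations are read off from the finite description of $M$.

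The main obstacle I expect is precisely the encoding of one nondeterministic machine step by the local relations $E_h$ and $E_v$: one must enforce simultaneously that each configuration row contains exactly one head, that the head moves and rewrites consistently with the transition function of $M$, and that every remaining cell is copied faithfully, using only constraints that inspect two adjacent dominoes at a time. The prediction-and-head-marker device above is the crux, and checking that it neither admits spurious tilings (rows with zero or two heads, or a silent desynchronisation of predicted and actual contents) nor forbids legitimate computations is the part that requires the most careful, if routine, bookkeeping.
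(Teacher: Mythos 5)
First, a point of reference: the paper does not prove this statement at all --- Theorem~\ref{thm:cs-domino} is imported from Latteux and Simplot, so there is no internal proof to compare yours against. Your reconstruction via linear-bounded automata is the natural one, and the simulation itself (configurations as rows, head-movement markers plus predicted cell contents to compress the three-cell window of a machine step into the pairwise relations $E_h$ and $E_v$) is the standard device and is believable as far as it goes: it shows that a word $w \in \Sigma^*$ admits a corridor tiling if, and only if, $M$ accepts~$w$.

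The genuine gap is the concluding claim $L(\mathcal{D}) = L$. In the formalism of this paper the frontier word is the top row and \emph{nothing lies above it}, so no choice of $E_h$, $E_v$ can forbid your computation dominoes from occurring in row~$0$. You can enforce that letter dominoes occur \emph{only} in row~$0$ (let no pair of $E_v$ have a letter as its lower element), but not the converse, and the converse genuinely fails: delete the top row of any witnessing tiling of height $m \ge 2$ and you obtain a valid corridor tiling --- side and bottom borders intact --- whose frontier is the second row. Hence the encoding of the initial configuration of $M$ on $w$, and inductively every intermediate configuration row, is itself a word of $L(\mathcal{D})$, although it lies outside $\Sigma^*$. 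Your construction therefore yields $L(\mathcal{D}) \cap \Sigma^* = L$ but $L(\mathcal{D}) \supsetneq L$, and since the row-deletion argument applies to every domino system, no construction whose intermediate rows leave $\Sigma^*$ can achieve exact equality in this formalism; so the sentence claiming that ``the frontier of any tiling is exactly the input $w$'' is not something your rules can secure. Two repairs are possible: either (i) prove the weaker statement $L(\mathcal{D}) \cap \Sigma^* = L$, which is in fact all the paper needs downstream, since covering in Proposition~\ref{prop:domino-games} is evaluated relative to a terminal alphabet --- note that the paper's own example (Figure~\ref{fig:dominos}) likewise has bracket words in its frontier language; or (ii) work in Latteux--Simplot's own picture formalism, in which the word row is additionally bordered by $\#$ from above, so that $E_v$-pairs of the form $(\#, d)$ constrain row~$0$: admitting only $\Sigma$-dominoes beneath the top border blocks the row-deletion argument and makes exact equality attainable.
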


Figure~\ref{fig:dominos} describes a domino system for recognising the
language $a^nb^n$ also covered by the game in Figure~\ref{fig:game-cover}.
In the following, we show that domino systems can generally be described in terms
of consensus game acceptors. 

\subsection{Uniform encoding of domino problems in games}
Game formulations of domino tiling problems
are standard in complexity theory, going back to the early work of Chlebus~\cite{Chlebus86}. 
However, these reductions are typically non-uniform: they construct
for every input instance consisting of a domino system together with a border
constraint a different game which depends, in particular, on the size of
the constraint. 
Here, we use imperfect information to define a \emph{uniform} reduction that associates to
a fixed domino system $\mathcal{D}$ a game $G(\mathcal{D})$,
such that for every border constraint $w$, the question whether
$\mathcal{D}, w$
allows a correct tiling is reduced to the question of whether 
decision~$1$ is safe in a certain play associated to $w$ in $G(\mathcal{D})$. 
 
\begin{figure}[t]
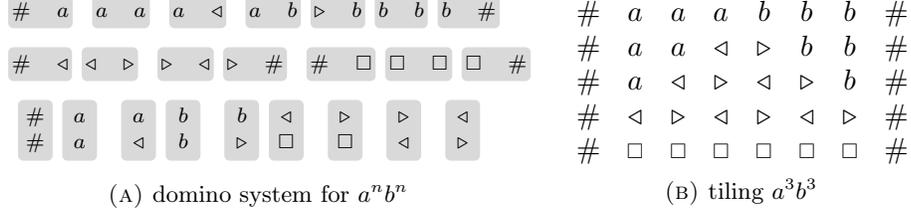

\begin{center}
\subfloat[domino system for $a^n b^n$]{
\label{fig:domino-system}
\tikzset{every picture/.style={node distance=1em,->,>=stealth',shorten >=1pt,auto}} 
\tikzstyle{every node}=[rounded corners=2pt, fill=black!15, 
draw=none, minimum width=.4em, inner sep=1.4pt, font=\scriptsize]
\begin{tabular}{l}
\tikz \node{$\begin{matrix}
\# & a \\
\end{matrix}
$};
\ 
\tikz \node{$\begin{matrix}
a & a \\
\end{matrix}
$};
\ 
\tikz \node{$\begin{matrix}
a & \triangleleft \\
\end{matrix}
$};
\ 
\tikz \node{$\begin{matrix}
a & b \\
\end{matrix}
$};
~\tikz \node{$\begin{matrix}
\triangleright & b \\
\end{matrix}
$};
~\tikz \node{$\begin{matrix}
b & b \\
\end{matrix}
$};
~\tikz \node{$\begin{matrix}
b & \# \\
\end{matrix}
$};
\\[.3em] 
\tikz \node{$\begin{matrix}
\# & \triangleleft \\
\end{matrix}
$};
~\tikz \node{$\begin{matrix}
\triangleleft & \triangleright \\
\end{matrix}
$};
\ 
\tikz \node{$\begin{matrix}
\triangleright & \triangleleft\\
\end{matrix}
$};
~\tikz \node{$\begin{matrix}
\triangleright & \# \\
\end{matrix}
$};
\ 
\tikz \node{$\begin{matrix}
\# & \mbox{\tiny $\Box$}\\
\end{matrix}
$};
~\tikz \node{$\begin{matrix}
\mbox{\tiny $\Box$}& \mbox{\tiny $\Box$} \\
\end{matrix}
$};
~\tikz \node{$\begin{matrix}
\mbox{\tiny $\Box$} & \# \\
\end{matrix}
$};\\[.3em]
\tikzstyle{every node}=[rounded corners=2pt, minimum height=2.1em, fill=black!15,
draw=none, minimum width = 1.2em,
inner sep=1pt, font=\scriptsize]
\tikz \node{$\begin{matrix}
\# \\
\#
\end{matrix}
$}; 
~\tikz \node{$\begin{matrix}
a \\
a
\end{matrix}
$}; \,\
\tikz \node{$\begin{matrix}
a \\
\triangleleft
\end{matrix}
$}; 
~\tikz \node{$\begin{matrix}
b  \\
b 
\end{matrix}
$}; \,\
\tikz \node{$\begin{matrix}
b \\
\triangleright
\end{matrix}
$};
~\tikz \node{$\begin{matrix}
\triangleleft \\
\mbox{\tiny $\Box$}
\end{matrix}
$}; \,\
\tikz \node{$\begin{matrix}
\triangleright \\
\mbox{\tiny $\Box$}
\end{matrix}
$}; \,\
\tikz \node{$\begin{matrix}
\triangleright \\
\triangleleft
\end{matrix}
$}; \,\
\tikz \node{$\begin{matrix}
\triangleleft\\
\triangleright
\end{matrix}
$};
\end{tabular}
}
\subfloat[tiling $a^3b^3$]{
  \label{fig:tiling}
  \begin{tabular}{cccccccc}
$\#$ & $a$ & $a$ & $a$ & $b$ & $b$ & $b$ & $\#$ \\
$\#$ & $a$ & $a$ & $\triangleleft$ & $\triangleright$ & $b$ & $b$ & $\#$  \\
$\#$ & $a$ & $\triangleleft$ & $\triangleright$ & $\triangleleft$ & $\triangleright$ & $b$ & $\#$ \\
$\#$ & $\triangleleft$ & $\triangleright$ & $\triangleleft$ & $\triangleright$ & $\triangleleft$ & $\triangleright$ & $\#$ \\
$\#$ & \scriptsize{$\Box$} & \scriptsize{$\Box$}&\scriptsize{$\Box$} &
\scriptsize{$\Box$} &\scriptsize{$\Box$} & \scriptsize{$\Box$} & $\#$ \\
\end{tabular}
}
\end{center}
\caption{Characterising a language with dominoes}
\label{fig:dominos}
\end{figure}

\begin{proposition}\label{prop:domino-games}
  For every domino system $D$, we can construct, in polynomial time, a
  consensus game acceptor 
  that covers the frontier language of~$D$.
\end{proposition}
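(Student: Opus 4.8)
The plan is to realise a single row of a tiling as the observation track of one play, and to let indistinguishability walk the tiling downward one row at a time. I fix the observation alphabet $\Gamma = D$ and build $G(\mathcal{D})$ from three kinds of states, besides the initial and final states (which all carry $\#$): for every vertical pair $(d,e)\in E_v$ a \emph{proper} state with $\beta^1 = d$, $\beta^2 = e$, and for every domino $d\in D$ a \emph{diagonal} state with $\beta^1 = \beta^2 = d$. I place a transition between two interior states exactly when both tracks are horizontally compatible, that is $(\beta^1(v),\beta^1(v'))\in E_h$ and $(\beta^2(v),\beta^2(v'))\in E_h$; transitions leaving the initial state and entering a final state additionally check the left and right borders against $\#$. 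Thus a play through proper states spells, along its two tracks, two horizontally valid rows that are columnwise $E_v$-compatible, i.e. two vertically adjacent rows of a tiling, whereas a play through diagonal states spells a single valid row observed identically by both players.

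Next I would establish the correspondence between $\sim^*$ and vertical succession. A proper play observed as $(r,r')$ is $\sim^2$-linked to the diagonal play $(r',r')$, which is $\sim^1$-linked to any proper play $(r',r'')$; iterating, the connected class of the play in which Player~$1$ observes the top row $w$ consists of the plays whose tracks are rows reachable from $w$ by vertical adjacency. I make $\Box$ an $E_v$-sink and route the all-$\Box$ diagonal play into a final state with admissible set $\{1\}$, while every other terminating play reaches a neutral final state with $\{0,1\}$; a small gadget of freely looping diagonal states over $\Sigma$ ensures every $w\in\Sigma^+$ is observable and, when $w$ is not a legal row, reaches only neutral states. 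Hence decision~$1$ is forced exactly on the classes meeting the all-$\Box$ configuration, and no class contains a $\{0\}$-only state, so no class is contradictory.

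The covering properties then follow. Solvability holds because every class admits the safe decision~$1$ (there are no $\{0\}$-only finals). For any winning strategy and any $w\in L(\mathcal{D})$, a genuine tiling yields a downward chain of adjacencies from $w$ to $\Box^\ell$, so the play with $\beta^1 = w$ is $\sim^*$-connected to a $\{1\}$-state, decision~$0$ is unsafe, and the strategy must answer~$1$; this gives $L(\mathcal{D})\subseteq L(s)$ for every winning $s$. For the matching lower strategy I take the canonical $s$ that answers~$1$ on a play iff its class meets a $\{1\}$-state: it is well defined on classes, prescribes only safe decisions, and by the correspondence satisfies $L(s)=L(\mathcal{D})$.

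The delicate point, and the main obstacle, is the converse of the correspondence: that whenever the class of the play observed as $w$ meets the all-$\Box$ configuration, $w$ is genuinely tileable. Since $\sim^*$ is symmetric it a priori admits sideways and upward moves, so I must rule out spurious links that would connect a non-tileable $w$ to $\Box^\ell$ and thereby enlarge the covered language beyond $L(\mathcal{D})$. The leverage is the border discipline together with the one-way character of the vertical relation: $\Box$ has no $E_v$-successor, and the interior rows relevant to a terminal top row form directed chains whose only predecessors are trivial, so each connected class collapses to the downward tiling issuing from a top row and meets the decision-$1$ configuration precisely when that row admits a completion. Making this collapse watertight, so that the covered language is exactly $L(\mathcal{D})$ rather than a superset, is where the real work lies; for the backward-deterministic systems furnished by Theorem~\ref{thm:cs-domino} the reorientation is immediate, which is all that the ensuing characterisation of context-sensitive languages requires.
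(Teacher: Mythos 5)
Your game is, state for state, the paper's game --- the same singleton and pair states, the same observation assignment (Player~1 sees the upper domino, Player~2 the lower), the same horizontal-compatibility transitions, and the same finals with $\Omega(\widehat{z}) = \{1\}$ reachable only from the all-$\Box$ row --- and your handling of solvability, of the inclusion $L(\mathcal{D}) \subseteq L(s)$ for every winning strategy via downward chains, and of the canonical strategy (answer $1$ exactly on the classes meeting the all-$\Box$ play) is the paper's argument as well. The genuine gap is the step you yourself flag and then defer: that $\sim^*$-connectivity of the class of $w$ to the all-$\Box$ play implies $w \in L(\mathcal{D})$. None of the leverage you invoke closes it. Take terminal dominoes $a, b$, let $E_h$ allow all rows (so every word over $\{a,b\}$ is observable), and let $E_v = \{(a,b),\ (a,\Box),\ (\#,\#)\}$. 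This system has $\Box$ as an $E_v$-sink and a backward-deterministic vertical relation (in-degree at most one), exactly the properties you appeal to; nevertheless $b \notin L(\mathcal{D})$, since nothing may be placed below $b$, while the play in which both players observe $b$ is $\sim^2$-linked to the pair play ($a$ over $b$), which is $\sim^1$-linked to the pair play ($a$ over $\Box$), which is $\sim^2$-linked to the all-$\Box$ play where only decision~$1$ is admissible. Hence every winning strategy answers $1$ at $b$, the covered language strictly contains $L(\mathcal{D})$, and the proposition fails for this system. The obstruction is precisely the up-then-down chains you worried about; what would exclude them is \emph{forward} determinism of the row relation (each row has at most one legal row below it), but for the systems of Theorem~\ref{thm:cs-domino} that would amount to determinising a nondeterministic linear-bounded automaton --- the representation theorem asserts no determinism of any kind, so your closing appeal to ``backward-deterministic systems furnished by Theorem~\ref{thm:cs-domino}'' is not available.

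You should also know that the paper's own proof founders on the same rock, only more quietly. After the easy direction, it exhibits the witnessing strategy $s$ with $s(w)=1$ iff $w \in L(\mathcal{D})$ and declares it ``winning since $s(\Box^*) = 1$''. That checks admissibility at $\widehat{z}$ but not the consensus condition: for $s$ to be a strategy at all it must be constant on $\sim^*$-classes, which is exactly the converse you isolated, and exactly what the example above refutes ($a$ and $b$ lie in one class, yet $a \in L(\mathcal{D})$ and $b \notin L(\mathcal{D})$). So your write-up is a faithful reconstruction of the paper's proof that is more candid about where the reasoning is incomplete; but judged as a proof of the statement as written (``for every domino system''), both contain the same genuine gap, and the gap is not a missing routine detail: the statement must be repaired --- either by restricting to domino systems whose row graph has the collapse property (and proving that the systems of Theorem~\ref{thm:cs-domino} can be arranged to have it) or by modifying the construction --- before any argument along these lines can go through.
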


\begin{proof}
Let us fix a domino system $\mathcal{D} = (D, E_h, E_v)$ with a left
border domino~$\#$ and a bottom domino~$\Box$.  
We construct a consensus game $G$ for the alphabet 
$\Sigma := D \setminus \{ \#, \Box \}$ to cover the frontier
language $L(\mathcal{D})$. There are
domino states of two types: singleton states $d$ for each $d \in
D \setminus \{ \# \}$ and pair states $(d, b)$ for each $(d, b) \in E_v$.
At singleton states $d$, the two players receive the
same observation $d$. 
At states $(d, b)$, the first player observes $d$ and the second player $b$.
The domino states are connected by transitions 
$d \to d'$ for every $(d, d') \in E_h$, and
$(d,b) \to (d',b')$ whenever $(d, d')$
and $(b,b')$ are in $E_h$. 
There is an initial state $v_0$
and two final states $\widehat{z}$ and $z$, all associated to the
the observation $\#$ for the border domino.
From~$v_0$ there are transitions to all 
compatible domino states $d$ with $(\#, d) \in E_h$, 
and all pair states $(d, b)$ with $(\#, d)$ and $(\#,
b) \in E_h$. 
Conversely, the final state~$z$ is reachable from all 
domino states $d$ with $(d, \#) \in E_h$, 
and all pair states $(d, b)$ with $(d, \#)$ and 
$( b, \#) \in E_h$; the final $\widehat{z}$ is reachable only from
the singleton bottom domino state~$\Box$.  
Finally, admissible decisions are $\Omega(z) = \{ 0, 1\}$ 
and $\Omega( \hat{z} ) =  \{ 1 \}$. 
Clearly, $G$ is a consensus game, 
and the construction can be done in polynomial time.

Note that any sequence $x = d_1\, d_2\, \dots\, d_\ell \in D^\ell$
that forms a horizontally consistent row in a tiling  by $\mathcal{D}$
corresponds in the game to a play 
$\pi_x = v_0\, d_1\, d_2\, \dots\, d_\ell, z$ 
or $\pi_x = v_0\, \Box^\ell\, \widehat{z}$.
Conversely, every play in $G$ 
corresponds either to one possible row, 
in case Input chooses a single domino 
in the first transition, or to two rows, in case it chooses a pair.
Moreover, a row $x$ can appear 
on top of a row $y = b_1\, b_2\, \dots\, b_\ell \in D^\ell$ in a tiling 
if, and only if, 
there exists a play~$\rho$  in $G$ such that 
$\pi_x \sim^1 \rho \sim^2 \pi_y$, namely 
$\rho = v_0\, (d_1, b_1)\, (d_2,b_2)\, \dots (d_\ell,
b_\ell)\, z$.

Now, we claim that at an observation sequence $\pi = w$  for 
$w \in \Sigma^\ell$ 
the decision~$0$ is safe
if, and only if, there exists no correct corridor tiling by
$\mathcal{D}$ with $w$ in the top row. 
According to our remark, 
there exists a correct tiling of the corridor with top row $w$, 
if and only if, there exists a sequence
of rows corresponding to plays $\pi_1, \dots, \pi_m$,
and a sequence of witnessing plays 
$\rho_1, \dots, \rho_{m-1}$ such that 
$w = \pi_{1} \sim^1 \rho_1 \sim^2 \pi_{2} \dots \sim^1
 \rho_{m-1} \sim^2 \pi_{m} = \Box^\ell. $
However, the decision~$0$ is unsafe  
in the play $\Box^\ell$ and therefore at $w$ as well. 
Hence, every winning
strategy $s$ for $G$ must prescribe $s(w) = 1$, 
for every word $w$ in the frontier language of $\mathcal{D}$, meaning
that $L( s ) \subseteq L( \mathcal{D} )$.

Finally, consider 
the mapping $s: D^* \to A$ that prescribes $s(w) = 1$
if, and only, if $w \in L( \mathcal{D} )$. The 
observation-based strategy $s$ in the consensus game~$G$ 
is winning since $s( \Box^*) = 1$, and it witnesses the condition
$L( s ) = L( \mathcal{D} )$. 
This concludes the proof that the constructed consensus game $G$ covers the frontier
language of~$\mathcal{D}$. 
\end{proof}

\section{Characterising context-sensitive languages}
Our first result establishes a
correspondence between context-sensitive languages and 
consensus games.

\begin{theorem}\label{thm:char-cs}
For every context-sensitive language $L \subseteq \Sigma^*$, 
we can construct effectively a consensus game acceptor 
that characterises $L$.
\end{theorem}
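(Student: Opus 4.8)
The plan is to assemble the three tools established above — the domino-system characterisation of context-sensitive languages (Theorem~\ref{thm:cs-domino}), the game construction covering frontier languages (Proposition~\ref{prop:domino-games}), and the cover-to-characterise principle (Lemma~\ref{lem:cover-char}) — with one additional ingredient: the closure of the context-sensitive languages under complementation. The overall strategy is to cover $L$ and its complement $\bar{L}$ by two separate games, and then glue them together via Lemma~\ref{lem:cover-char}.

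First I would observe that, since $L$ is context-sensitive, so is its complement $\bar{L} = (\Sigma^* \setminus \{\varepsilon\}) \setminus L$. This is precisely the Immerman--Szelepcs\'enyi theorem, which asserts that nondeterministic space classes are closed under complement; applied to the nondeterministic linear-bounded automata of Kuroda~\cite{Kuroda64}, it guarantees that $\bar{L}$ is again recognised by such a machine and hence context-sensitive. I would note that restricting to $\Sigma^+$ poses no problem, as the class is also closed under intersection with the regular set $\Sigma^+$.

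Next, applying Theorem~\ref{thm:cs-domino} to both $L$ and $\bar{L}$, I would obtain domino systems $\mathcal{D}$ and $\mathcal{D}'$ with frontier languages $L(\mathcal{D}) = L$ and $L(\mathcal{D}') = \bar{L}$. Proposition~\ref{prop:domino-games} then effectively produces consensus game acceptors $G$ and $G'$ that cover $L$ and $\bar{L}$, respectively. Finally, Lemma~\ref{lem:cover-char} delivers the conclusion directly: writing $G''$ for the game obtained from $G'$ by inverting the admissible decisions, the union $G \cup G''$ characterises $L$. Since each construction step is effective, the resulting game is obtained effectively from any linear-bounded automaton for $L$.

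The main point — and the only place where a nontrivial external result enters — is the appeal to closure under complement, which is what lets $\bar{L}$ be fed back into the domino machinery in the first place; the remainder is assembly of the already-proven components. I do not expect a genuine obstacle here, but I would take care that the two observation alphabets of $G$ and $G''$ meet only on $\Sigma$, as required by Lemma~\ref{lem:cover-char}, renaming the auxiliary observations of one game if necessary.
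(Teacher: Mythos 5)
Your proposal is correct and follows essentially the same route as the paper's own proof: invoke the Immerman--Szelepcs\'enyi theorem for closure under complement, build domino systems for $L$ and $\bar{L}$ via Theorem~\ref{thm:cs-domino}, obtain covering games from Proposition~\ref{prop:domino-games}, and combine them with Lemma~\ref{lem:cover-char}. Your extra remarks on renaming observations so the alphabets meet only on $\Sigma$ are a care the paper delegates to the proof of Lemma~\ref{lem:cover-char} itself, so nothing differs in substance.
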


\begin{proof}
Let $L\subseteq \Sigma^*$ be an arbitrary context-sensitive language,
represented, e.g., by a linear-bounded automaton.
By Theorem~\ref{thm:cs-domino}, 
we can effectively construct a domino system~$\mathcal{D}$ with frontier
language~$L$. 
Further, by Proposition~\ref{prop:domino-games},
we can construct a consensus game $G$ that covers $L ( \mathcal{D} ) =
L$.
Due to the  Immerman-Szelepcs\'enyi Theorem, 
context-sensitive languages are effectively closed under complement, so we
can construct a consensus game $G'$ that covers $\Sigma^* \setminus L$ following 
the same procedure.
Finally, we combine the games $G$ and $G'$ as described in
Lemma~\ref{lem:cover-char} to obtain
a consensus game that characterises $L$.
\end{proof}

One interpretation of the characterisation is that, 
for every context-sensitive language, 
there exists a consensus game that is as hard to
play as it is to decide membership in the language.
On the one hand, this implies that winning strategies for consensus games are in
general $\PSPACE$-hard. Indeed, there are instances of
consensus games that admit winning strategies, 
however, any machine that computes the decision to take in a play
requires space polynomial in the length of the play.

\begin{theorem}
There exists a solvable consensus game for which 
every winning strategy is $\PSPACE$-hard. 
\end{theorem}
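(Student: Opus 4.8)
The plan is to instantiate the characterisation of Theorem~\ref{thm:char-cs} at a fixed context-sensitive language whose membership problem is already known to be hard. Concretely, I would start from fact~(1) recalled in the preliminaries: there exists a fixed context-sensitive language $L \subseteq \Sigma^*$ for which deciding whether a given word $w \in \Sigma^*$ belongs to~$L$ is $\PSPACE$-hard. Applying Theorem~\ref{thm:char-cs} to this~$L$ yields, effectively, a consensus game acceptor $G$ that characterises~$L$. Since characterisation strengthens coverage, and coverage requires solvability by definition, the game $G$ is solvable; this already discharges the first half of the statement.

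For the hardness half, I would unfold the definition of characterisation. By construction, $G$ characterises~$L$, so $L = L(s)$ holds for \emph{every} winning strategy $s \in S^1$; equivalently, every winning strategy satisfies $s(w) = 1$ exactly when $w \in L$, for all $w \in \Sigma^*$. In other words, the restriction of any winning strategy to $\Sigma^*$ coincides with the characteristic function of~$L$. Consequently, any device computing a winning strategy also decides membership in~$L$ on inputs from $\Sigma^*$, via the trivial reduction $w \mapsto w$.

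It then remains to transfer the complexity lower bound: since deciding $L$ is $\PSPACE$-hard and the decision problem ``does $s(w) = 1$?'' agrees with membership in~$L$ on $\Sigma^*$, computing any winning strategy of~$G$ is $\PSPACE$-hard as well. This establishes that $G$ is a solvable consensus game all of whose winning strategies are $\PSPACE$-hard, as required.

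I do not anticipate a genuine obstacle here, as the statement is essentially a corollary of Theorem~\ref{thm:char-cs} together with the known hardness of context-sensitive membership. The only point deserving care is the universal quantifier in the definition of characterisation: the hardness must hold for \emph{every} winning strategy, not merely for the canonical one with $L(s) = L$, and it is precisely the clause ``$L = L(s)$ for every winning strategy'' --- which distinguishes characterisation from mere coverage --- that guarantees this. One should also note that the language, the alphabet $\Sigma$, and the reduction witnessing $\PSPACE$-hardness are all fixed in advance of constructing~$G$, so the hardness genuinely concerns a single game rather than a family parametrised by the input.
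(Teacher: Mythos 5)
Your proposal is correct and follows essentially the same route as the paper's own proof: fix a context-sensitive language with a $\PSPACE$-hard word problem, apply Theorem~\ref{thm:char-cs} to obtain a solvable game characterising it, and observe that the clause ``$L = L(s)$ for every winning strategy'' makes the identity map a reduction from membership in $L$ to evaluating any winning strategy. The paper's proof is a compressed version of exactly this argument, including your closing remark that the hardness concerns every winning strategy of a single fixed game.
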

\begin{proof}
There exist context-sensitive languages with
a $\PSPACE$-hard word problem~\cite{Kuroda64}. 
Let us fix such a language $L \subseteq \Sigma^*$ together with a  
consensus game $G$ that characterises it, according to
Theorem~\ref{thm:char-cs}. 
This is a solvable game, and every winning strategy 
can be represented as an observation-based strategy~$s$ for the first player.
Then, the membership problem for $L$ reduces (in linear time) to the problem of
deciding the value of $s$ in a play in $G$: 
For every input word $w \in \Sigma^*$, we have $w \in L$ if, and only if, $s( w ) =
1$. In conclusion, for
every winning strategy $s$ in $G$, it is $\PSPACE$-hard to decide whether $s( \pi ) = 1$.
\end{proof}

On the other hand, it follows that 
determining whether a consensus game admits a winning strategy 
is no easier than solving the emptiness problem of
context-sensitive languages, which is well known to be undecidable.

\begin{theorem}\label{thm:undecidable}
The question whether a consensus game admits a winning strategy is
undecidable.  
\end{theorem}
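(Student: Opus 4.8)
The plan is to reduce the emptiness problem for context-sensitive languages to the solvability problem for consensus games. By the preliminaries (Kuroda's results), the question whether a context-sensitive language, represented by a linear-bounded automaton, contains a nonempty word is undecidable. So if I can construct, from any context-sensitive language $L$, a consensus game that is solvable if and only if $L$ is nonempty, undecidability transfers.

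First I would take an arbitrary context-sensitive language $L \subseteq \Sigma^*$ given by a linear-bounded automaton. Since the goal is to test emptiness, I want to build a game whose solvability hinges on the existence of a word in $L$. The natural route is to reuse the machinery already developed: by Theorem~\ref{thm:cs-domino}, effectively construct a domino system~$\mathcal{D}$ with frontier language $L$, and by Proposition~\ref{prop:domino-games}, construct in polynomial time a consensus game $G$ that \emph{covers} $L(\mathcal{D}) = L$. The subtle point is that a covering game is always solvable --- the strategy $s$ with $L(s) = L$ witnesses solvability regardless of whether $L$ is empty --- so covering alone does not detect emptiness. The key idea is therefore to modify the construction so that the game is forced to be \emph{unsolvable} exactly when $L$ is empty: I would add a constraint requiring that \emph{some} nonempty word be accepted, for instance by inserting a final state that is connected, through the consensus relation $\sim^*$, to every input play over $\Sigma^+$ and at which only decision~$1$ is admissible, while simultaneously retaining a play (reachable by the empty or border input) at which only decision~$0$ is admissible. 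If no word of $L$ exists to reconcile these constraints, the two incompatible safe-decision requirements become connected, and no consensus strategy can satisfy both; conversely, a witnessing word in $L$ breaks the conflict and yields a winning strategy.

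Concretely, I would engineer the game so that the play demanding decision~$1$ is $\sim^*$-connected to the play demanding decision~$0$ precisely when the corridor-tiling instances for all candidate words fail, exploiting the same row-stacking argument used in Proposition~\ref{prop:domino-games}: there, decision~$0$ is unsafe at an observation sequence $w$ exactly when a correct tiling with top row $w$ exists, because $w$ is connected via witnessing plays to the bottom row $\Box^\ell$. I would arrange a symmetric gadget so that unsafety of decision~$1$ corresponds to the global assertion ``$L$ is empty''. Then solvability of the resulting game is equivalent to $L \neq \emptyset$, and since the latter is undecidable, so is solvability.

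The main obstacle I expect is controlling the $\sim^*$-connectivity globally rather than per-word: the covering construction establishes, for each fixed $w$, a local chain of indistinguishable plays, but to encode emptiness I must ensure that the ``decision~$1$ required'' obligation and the ``decision~$0$ required'' obligation stay in separate connected components exactly when some word is tileable, and collapse into one component otherwise. Making this conflict depend on the \emph{existence} of a tiling over \emph{all} inputs, while keeping the game finite and the observation alphabet fixed, is the delicate part; I would handle it by routing all inputs through a shared border-marked play whose decision is pinned, so that any accepted word short-circuits the conflict, and verify that no spurious connections arise from the disjoint-union structure, as in Lemma~\ref{lem:conjunction}.
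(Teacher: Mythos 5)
Your reduction has the wrong polarity, and the defect cannot be repaired by a cleverer gadget. You aim for a game that is solvable if and only if $L$ is \emph{nonempty}, i.e.\ \emph{unsolvable} exactly when $L$ is empty. But unsolvability of a consensus game is an existential property over finite witnesses: a game is unsolvable precisely when two plays with conflicting forced decisions are $\sim^*$-connected (equivalently, by Theorem~\ref{thm:char-iterated}$(i)$, when $\tau^* L_\acc \cap \tau^* L_\rej \neq \emptyset$), and since connecting chains are length-preserving, this can be verified by searching, for each length $n$, among the finitely many observation words of that length. Hence unsolvability is recursively enumerable, and solvability is co-r.e. Non-emptiness of a context-sensitive language, on the other hand, is r.e.-\emph{complete} (membership for a linear-bounded automaton is decidable, so one can enumerate candidate words). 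A many-one reduction from non-emptiness to solvability would place an r.e.-complete problem inside co-r.e., forcing the two classes to coincide --- a contradiction. The impossibility is already visible in your mechanism: connectivity is monotone in the existence of tilings. A word of $L$, together with its tiling, can only \emph{create} chains of indistinguishable plays (this is exactly how Proposition~\ref{prop:domino-games} forces decision~$1$); it can never \emph{sever} a connection. So ``the two requirements become connected when no word of $L$ exists'' cannot be engineered: the universal statement ``$L$ is empty'' is not equivalent to the existence of any finite chain, which is also the obstacle you yourself flag in your last paragraph.

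The paper uses the reverse, feasible polarity: solvable if and only if $L$ \emph{is} empty. It takes a game $G$ \emph{characterising} $L$ (Theorem~\ref{thm:char-cs}), so that every winning strategy must answer $1$ on every $w \in L$, and a second, trivial game $G'$ characterising the empty language over $\Sigma$ (a clique of states observed identically by both players, leading to a final state where only decision~$0$ is admissible), so that every winning strategy must answer $0$ on all of $\Sigma^*$. In the union $G \cup G'$ (Lemma~\ref{lem:conjunction}), both requirements bear on the same observation sequences: any $w \in L$ would force a winning strategy to output both $1$ and $0$ at $w$, so the union is unsolvable; if $L = \emptyset$, the all-zero requirements are consistent and the union is solvable. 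Undecidability then transfers from emptiness, which is exactly as undecidable as non-emptiness, so nothing is lost by flipping the biconditional. Note that your first two steps (domino system, covering game from Proposition~\ref{prop:domino-games}) are sound and could even replace the appeal to Theorem~\ref{thm:char-cs} in this corrected scheme: covering $L$ already forces decision~$1$ on all of $L$, which is all the conflict requires, provided the observation alphabets are made to intersect only in $\Sigma$ as in Lemma~\ref{lem:cover-char}.
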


\begin{proof}
We reduce the emptiness problem for a context-sensitive grammar
to the solvability problem for a consensus game.
For an arbitrary context-sensitive language $L\in \Sigma^*$ given as a linear bounded automaton, 
we construct a consensus game $G$ that characterises $L$, in polynomial time, 
according to Theorem~\ref{thm:char-cs}. 
Additionally, we construct a consensus game~$G'$
that characterises the empty language over $\Sigma^*$:
this can be done, for instance, by connecting a clique over letters in $\Sigma$
observable for both players to a final
state at which only the decision~$0$ is admissible.
Now, for any word $w \in \Sigma^*$, 
the game $G'$ requires decision $0$ at every observation sequences $w
\in \Sigma^*$, 
whereas $G$ requires decision $1$ whenever $w \in L$.
Accordingly, the consensus game $G \cup G'$ is solvable if, and only
if, $L$ is empty. As the emptiness problem for context-sensitive languages
is undecidable~\cite{Kuroda64}, it follows that the solvability
problem is undecidable for consensus game acceptors.
\end{proof}

We have seen that every context-sensitive language corresponds to a
consensus game acceptor such that language membership tests reduce to
winning strategy decisions in a play. 
Conversely, every solvable game admits a winning strategy that is the
characteristic function of some context-sensitive
language. Intuitively, a strategy should prescribe $0$ at a play 
$\pi$ whenever there exists a connected play
$\pi'$ at which~$0$ is the only admissible
decision. Whether this is the case can be verified by a
nondeterministic machine using space linear
in the length of the play $\pi$. 

\begin{theorem}
Every solvable consensus game 
admits a winning strategy that is 
implementable by a nondeterministic 
linear bounded automaton.
\end{theorem}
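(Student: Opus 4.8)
The plan is to exhibit one particular winning strategy and then show it is recognised by a nondeterministic linear-bounded automaton. First I would fix the canonical strategy $s$ that prescribes $s(\pi) = 0$ whenever the decision $0$ is safe at $\pi$, and $s(\pi) = 1$ otherwise. Since safety of a decision depends only on the $\sim^*$-class of $\pi$, this $s$ is constant on $\sim^*$-classes and hence satisfies the consensus condition. To see that $s$ is winning, I would invoke solvability: a solvable game admits a consensus strategy prescribing a safe decision for every play, so every $\sim^*$-class carries at least one safe decision. Thus if $0$ is not safe at $\pi$, then $1$ must be safe, and in either case $s(\pi)$ is a safe decision --- so $s$ is a winning strategy.

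It remains to implement $s$ as an observation-based strategy $s^1$ for Player~$1$. Because an observation sequence $w \in \Gamma^*$ identifies an information set, all plays $\pi$ with $\beta^1(\pi) = w$ are $\sim^1$-equivalent, hence $\sim^*$-connected, so $s^1(w)$ is well defined and equals $0$ exactly when $0$ is safe at every such $\pi$. Consequently $s^1(w) = 1$ if, and only if, there is a play $\pi_0$ with $\beta^1(\pi_0) = w$ that is connected to some play $\pi_k$ with $\Omega(\pi_k) = \{1\}$. I would therefore reduce computing $s^1$ to deciding reachability, from an initial play with observation $w$ to a play at which $0$ is inadmissible, in the graph whose vertices are plays and whose edges are the single steps $\sim^1$ and $\sim^2$.

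The crucial observation that keeps this search in linear space is that indistinguishable plays have the same length: $\pi \sim^i \pi'$ forces $\beta^i(\pi) = \beta^i(\pi')$, and equal observation sequences have equal length, so along any connection chain every play has exactly $|w|$ internal states. Hence a single play in the chain can be stored in space linear in $|w|$, even though the chain itself may be exponentially long. I would then describe a nondeterministic machine that guesses an initial play with observation $w$, and repeatedly guesses an index $i \in \{1,2\}$ together with a successor play sharing the same $\beta^i$-observation as the currently stored play, accepting as soon as it reaches a play where $0$ is inadmissible; verifying that a guessed successor is a legal play with the prescribed observation is a position-by-position check against the stored play and is thus local. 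A step counter bounded by the number $|V|^{|w|}$ of distinct plays, which fits in linear space, guarantees halting. Since the machine uses nondeterministic space linear in $|w|$, the language $\{w : s^1(w) = 1\}$ is context-sensitive by Kuroda's characterisation, so $s^1$ is implementable by a nondeterministic linear-bounded automaton. The main obstacle --- and the point I would treat most carefully --- is precisely this space accounting: arguing that connectivity across a potentially huge component can be certified while storing only one equal-length play at a time.
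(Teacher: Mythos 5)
Your proposal is correct and follows essentially the same route the paper takes (the paper only sketches it in the paragraph preceding the theorem): fix the canonical consensus strategy determined by which decision is forced on each $\sim^*$-class, observe that solvability makes it winning, and decide its value by nondeterministically guessing a chain of connected plays, storing one play at a time, which stays in linear space because indistinguishable plays have equal length. The only (immaterial) difference is that you default to $0$ when $0$ is safe while the paper prescribes $0$ exactly when some connected play forces it; your added details on well-definedness over information sets and the termination counter are fine.
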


\section{Consensus and iterated transductions}

Our aim in the following is to investigate how the
structure of a consensus game relates to the complexity of the
described language which, in turn, 
determines the complexity of winning strategies.
Towards this, we view games as finite-state automata 
representing the relation between the observation sequences received by the 
players and the admissible decisions.

A synchronous \emph{transducer} 
is a two-tape automaton $(Q, \Gamma, \Delta, q_0, F)$ over an alphabet $\Gamma$,
with state set~$Q$, a transition relation 
$\Delta \subseteq Q \times \Gamma \times \Gamma \times Q$ labelled by
pairs of letters, an initial state $q_0 \in Q$ and a non-empty set $F \subseteq
Q$ of final states; in contrast to games, final states of transducers may have outgoing
transitions. We write $p \xrightarrow{a|b} q$ to denote a transition $(p, a,
b, q) \in \Delta$.
An \emph{accepting run} of the transducer is a path 
$\rho = q_0 \xrightarrow{a_1|b_1} q_1 \xrightarrow{a_2|b_2} \dots
\xrightarrow{a_n|b_n} q_n$ that follows transitions in~$\Delta$ starting from the
initial state $q_0$ and ending at a final state $q_n \in F$. 
The \emph{label} of the run is the pair of words 
$(a_1 \dots a_n, b_1, \dots b_n)$. A pair of words $(w, w') \in
\Gamma^* \times \Gamma^* $ is \emph{accepted}  by the transducer
 if it is the label of some accepting run.
The relation \emph{recognised} by the transducer is the 
set $R \subseteq \Gamma^* \times \Gamma^*$ of accepted pairs of words.
A relation is \emph{synchronous} if it is recognised by a synchronous
transducer. In general, we do not distinguish notationally between transducers and
the relation they recognise.
For background on synchronous, or
letter-to-letter, transducers, 
we refer to the survey~\cite{Berstel79} of Berstel
and to Chapter IV in the book~\cite{Sakarovitch2009} of Sakarovitch.

Given a consensus game acceptor~$G = (V, E, \beta^1, \beta^2, v_0, \Omega)$
over an observation alphabet~$\Gamma$, we define the \emph{seed}
of $G$ to be the triple $(R, L_\acc, L_\rej)$ 
consisting of the relation 
$
R := \{\, (\beta^1( \pi ),  \beta^2( \pi )) \in (\Gamma \times \Gamma )^*~|~ 
\pi~\text{play in}~G \, \}
$ 
together with the languages 
$L_\acc \subseteq \Gamma^*$ and $L_\rej \subseteq \Gamma^*$ of observation sequences 
$\beta^1( \pi )$ on plays~$\pi$ in $G$ 
with $\Omega( \pi ) = \{ 1 \}$ and $\Omega( \pi ) = \{ 0 \}$,
respectively.
The seed of any finite game can be represented by finite-state automata.

\begin{lemma}
For every consensus game, the seed languages $L_\acc, L_\rej$ are regular
and the seed relation $R$ is recognised by a synchronous transducer.  
\end{lemma}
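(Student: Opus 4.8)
The plan is to read off both the transducer and the two finite automata directly from the game graph, treating $G$ as a finite-state device that emits observation labels along its transitions. The one point requiring care is the bookkeeping convention that the observation sequence $\beta^i(\pi)$ of a play $\pi = v_0\,v_1\,\dots\,v_n\,v_{n+1}$ records the labels of the intermediate states $v_1,\dots,v_n$ only, dropping both the initial state $v_0$ and the final state $v_{n+1}$. I will absorb this one-step offset into the transition structure of the automata.

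For the relation $R$, I construct a synchronous transducer whose state set is $V$ with initial state $v_0$. For every edge $(u,v) \in E$ whose target $v$ is not a final state of $G$, I add a transition $u \xrightarrow{\beta^1(v)\,|\,\beta^2(v)} v$, emitting the observation pair of the state just entered; since final states have no outgoing transitions, this never emits the label of $v_0$ nor that of a final state. A state $u$ is declared accepting in the transducer exactly when $G$ has an edge from $u$ to some final state. Then the accepting runs of the transducer are in bijection with the plays of $G$: a run following $v_0\,v_1\,\dots\,v_n$ and stopping at $v_n$ (which must carry an outgoing edge to a final state) corresponds to the play extending it by that final state, and its label is precisely $(\beta^1(\pi),\beta^2(\pi))$. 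Because each transition emits exactly one letter on each tape, the two coordinates always have equal length, so the transducer is letter-to-letter and $R$ is synchronous.

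For the seed languages I reuse the same state space and the same offset, keeping only the first tape. An NFA for $L_\acc$ has states $V$, initial state $v_0$, transitions $u \xrightarrow{\beta^1(v)} v$ for each edge $(u,v) \in E$ with $v$ non-final, and as accepting states those $u$ from which some edge leads to a final state $z$ with $\Omega(z) = \{1\}$. By the same correspondence between runs and plays, this automaton accepts exactly the sequences $\beta^1(\pi)$ over plays $\pi$ with $\Omega(\pi) = \{1\}$, that is $L_\acc$; taking instead the condition $\Omega(z) = \{0\}$ on the reached final state yields an NFA for $L_\rej$. Both are finite automata over $\Gamma$, hence recognise regular languages.

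The statement is essentially a direct transcription of the definitions, so I do not anticipate a genuine obstacle. The only delicate point is the index shift between a path and its observation sequence; I handle it by emitting the label of the state \emph{entered} rather than the state left, and by moving the final-state test onto the last intermediate state. This keeps the three constructions uniform and, crucially, avoids the $\varepsilon$-transitions that a naive labelling of edges by their target's observation would otherwise force, which would destroy the letter-to-letter property needed for synchronicity.
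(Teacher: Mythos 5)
Your proposal is correct and follows essentially the same construction as the paper: both move each state's observation onto its incoming transitions, take $v_0$ as initial state, and declare accepting exactly those states with an edge into a final game state (filtered by $\Omega = \{1\}$ or $\Omega = \{0\}$ for the two seed languages). The handling of the index offset, including the letter-to-letter property of the transducer, matches the paper's argument.
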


\begin{proof} 
We construct automata from the game graph by moving 
observations from each state to the incoming transitions. 
Let $G = (V, E, \beta, v_0, \Omega)$ be a consensus game over an observation
alphabet~$\Gamma$, and let $(R, L_\acc, L_\rej)$ be its seed. 
We define three automata over the alphabet~$\Gamma$, on the subset
of~$V$ consisting of non-final game states and with initial state $q_0 = v_0$.

The automata for the seed languages~$L_\acc$ and $L_\rej$ allow
transitions $u \xrightarrow{a} v$ if $(u, v) \in E$ and $\beta^1( v ) = a$.
The set of final states 
consists of all game states~$v$ with an outgoing transition $(v,v') \in E$ 
to some final state with $\Omega( v' ) = \{ 1 \}$ for~$L_\acc$, and
with $\Omega( v' ) = \{ 0 \}$ for $L_\rej$.
Then, for all words $a_1 \dots a_n \in \Gamma^*$, accepting runs 
$v_0 \xrightarrow{a_1} v_1 \xrightarrow{a_2} \dots \xrightarrow{a_n} v_n$ of the automata
$L_\acc$ and $L_\rej$ correspond to plays $\pi := v_0 v_1 \dots v_n v'$ 
with observations $\beta^1( \pi ) = a_1 \dots a_n$ such that 
$\Omega( \pi ) = \{ 1 \}$ and $\Omega (\pi ) = 0$,
respectively. Hence, the automata recognise $L_\acc$ and $L_\rej$.

Similarly, the transducer for the seed relation has transitions
$u \xrightarrow{a|b} v$ whenever $(u, v) \in E$ with
$\beta^1( v ) = a$, $\beta^2( v ) = b$, and its final states are
states $v\in V$ with an outgoing transition $(v, v') \in E$ to some terminal state $v'$ in $G$.
Then, for any pair of words $( a_1 \dots a_n,  b_1 \dots b_n)  \in \Gamma^* \times\Gamma^*$,  
there exists an accepting transducer run
$
v_0 \xrightarrow{a_1|b_1} v_1 \xrightarrow{a_2|b_2} \dots
\xrightarrow{a_n|b_n} v_n
$
if, and only if, there exists
a play $\pi = v_0, v_1, \dots, v_n, v'$ with
$\beta^1( \pi ) = a_1 \dots a_n$, $\beta^2( \pi' ) = b_1 \dots b_n$,
for some final game state $v'$. 
So, the transducer recognises the seed relation $R$, as intended.
\end{proof}

Conversely, we can turn synchronous transducers 
and automata over matching alphabets into games.

\begin{lemma}
Given a synchronous relation $R \subseteq \Gamma^* \times \Gamma^*$ 
and two disjoint regular languages 
$L_\acc$, $L_\rej \subseteq \Gamma^*$, 
we can construct a consensus game with seed $(R, L_\acc, L_\rej)$. 
\end{lemma}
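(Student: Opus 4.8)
The plan is to reverse the construction of the preceding lemma: whereas there one pushes the observation of each state onto its incoming transitions in order to extract automata from a game, here I pull the transition labels of the given automata back onto states in order to build a game. Concretely, I fix a synchronous transducer $T = (Q, \Gamma, \Delta, q_0, F)$ recognising $R$ together with deterministic automata for $L_\acc$ and $L_\rej$. Since $L_\acc$ and $L_\rej$ are regular and disjoint, I first merge the latter two into a single deterministic automaton $A$ over $\Gamma$ that reads a word $u$ and computes a colour $\chi(u) \in \{\acc, \rej, \mathord{*}\}$, equal to $\acc$ exactly when $u \in L_\acc$, to $\rej$ exactly when $u \in L_\rej$, and to $\mathord{*}$ otherwise.

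The heart of the construction is the product $B = T \times A$, in which the transducer runs on a pair $(u, w)$ while $A$ simultaneously scans the first component $u$. A state of $B$ is a pair $(q, p)$, and $B$ carries a transition $(q,p) \xrightarrow{a|b} (q', p')$ whenever $q \xrightarrow{a|b} q'$ in $T$ and $p \xrightarrow{a} p'$ in $A$. I then turn $B$ into a consensus game by introducing one game state for each transition of $B$, carrying the observations $\beta^1 = a$ for Player~1 and $\beta^2 = b$ for Player~2; two such states are linked in the game precisely when the underlying transitions are composable in $B$, that is, the target of the first equals the source of the second. A fresh initial state $v_0$, labelled~$\#$, branches to the states coming from the transitions out of the initial pair $(q_0, p_0)$. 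Finally, from every state whose $B$-transition ends in a pair $(q, p)$ with $q \in F$ I add an edge to a final state, routing to one with admissibility $\{1\}$ if $p$ is $\acc$-coloured, to one with $\{0\}$ if $p$ is $\rej$-coloured, and to one with $\{0,1\}$ otherwise.

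I would then verify that this game has seed $(R, L_\acc, L_\rej)$. By construction the plays of the game correspond bijectively to the accepting runs of $T$, so the set of observation pairs $(\beta^1(\pi), \beta^2(\pi))$ is exactly $R$; and a play is routed to a pure-accept final state exactly when its Player-1 observation $u$ satisfies $\chi(u) = \acc$, i.e.\ $u \in L_\acc$, and symmetrically for $L_\rej$. The one point that needs care, and which I expect to be the main obstacle, is matching the accept and reject languages \emph{exactly} rather than up to the domain of $R$: because every play simultaneously produces an $R$-pair, the Player-1 observations that can occur at all are confined to $\{\,u : (u,w) \in R \text{ for some } w\,\}$, so the construction a priori yields accepting language $L_\acc \cap \mathrm{dom}(R)$ and rejecting language $L_\rej \cap \mathrm{dom}(R)$. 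This is harmless, because containment of $L_\acc$ and $L_\rej$ in $\mathrm{dom}(R)$ is already a necessary condition for \emph{any} game to realise the seed — it holds for every seed arising from an actual game, since each accepting or rejecting play is in particular a play and hence contributes a pair to $R$ — so I may assume it without loss of generality, whereupon the two languages are reproduced exactly. The remaining verifications, namely that $v_0$ has no incoming edge, that the final states have no outgoing edge, and that the boundary symbols~$\#$ are placed as prescribed, are routine bookkeeping.
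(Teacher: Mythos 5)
Your construction is correct and is essentially the paper's own proof: the paper likewise builds game states from transducer transitions paired with deterministic automaton states tracking $L_\acc$ and $L_\rej$ (two separate DFAs rather than your single coloured product, an immaterial difference), updates those states along Player~1's observations, adds a fresh $\#$-labelled initial state, and routes plays whose transducer component reaches a final state to final game states with admissibility $\{1\}$, $\{0\}$, or $\{0,1\}$ according to acceptance. Your explicit treatment of the $\mathrm{dom}(R)$ subtlety is a point the paper's proof silently glosses over (it simply asserts the resulting seed is $(R, L_\acc, L_\rej)$), and your resolution --- that $L_\acc, L_\rej \subseteq \mathrm{dom}(R)$ may be assumed since it holds for every seed arising from an actual game --- is the right reading of the statement.
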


\begin{proof}
Let us consider a synchronous transducer $\calR$ and
two word automata $\calA$, $\calB$
that recognise $R$, $L_\acc$
and $L_\rej$, respectively. We assume that the word automata are
deterministic, with transition functions
$\delta^\calA: Q^\calA \times \Gamma \to Q^\calA$ 
and $\delta^\calB: Q^\calB \times \Gamma \to Q^\calB$.
To avoid confusion, we label the components of each automaton with its name 
and write $p \xrightarrow[\calR]{a} q$ for the transitions in
$\calR$. 

We construct a game~$G$ 
over the
observation alphabet $\Gamma \cup \{\# \}$ with states
formed of three components: a transducer transition $p
\xrightarrow[\calR]{a|b} q$, 
a state of $\calA$, and one of $\calB$. The game transitions follow 
the adjacency graph of the transducer in the first component 
and update the state of~$\calA$ and $\calB$ in the second and
third component according to the observation~$a$ of player~$1$ in the
first component.
More precisely, the set of game states is
 $V := \Delta \times Q^\calA \times Q^\calB \cup \{ v_0,
 v_{\acc}, v_{\rej}, v_{=} \}$, where $v_0$ is a fresh initial state 
whereas $v_\acc$, $v_\rej$ and $v_{=}$ are final
states.
Game transitions lead from the
initial state $v_0$ to all states 
$(q_0^\calR \xrightarrow[\calR]{a|b} q, q_0^\calA, q_0^\calB)$;
for each pair of incident transducer 
transitions $e := p \xrightarrow[\calR]{a|b} q$ and
$e' := q \xrightarrow[\calR]{a'|b'} q'$, and for all 
automata states $q^\calA \in Q^\calA$, $q^\calB \in Q^\calB$,  
there is a game transition 
from $(e, q^\calA, q^\calB)$ to $(e', \delta^\calA ( q^\calA, a ), \delta^\calA ( q^\calB, a
))$; finally, from any state 
$(p \xrightarrow[\calR]{a|b} q, q^\calA, q^\calB )$ with $q \in
F^\calR$ there is a transition to $v_\acc$ if 
$q^\calA \in F^\calA$ , to $v_\rej$ if $q^\calB \in F^\calB$, and
otherwise to $v_=$.
The observation at state 
$( p \xrightarrow[\calR]{a|b } q, q^\calA, q^\calB)$ 
is $(a, b)$; at the initial and the final states both players observe $\#$.
Admissible decisions are $\Omega( v_\acc ) = \{ 1 \}$, 
$\Omega( v_\rej ) = \{ 0 \}$, and 
$\Omega( v_= ) = \{ 0,1 \}$.

Now, for any play $\pi$, the first component corresponds to an accepting
run of~$\calR$ on the pair of observation sequences $(\beta^1(\pi),
\beta^2( \pi ))$, whereas the second and third components correspond to
runs of $\calA$ and $\calB$ on $\beta^1( \pi )$ which are accepting
if $\Omega( \pi ) = \{ 1 \}$ and $\Omega( \pi ) = \{ 0 \}$,
respectively.
Accordingly, the game $G$ has seed $(R, L_\acc, L_\rej)$. 
\end{proof}


Thanks to the translation between games and automata, we can reason
about games in terms of 
elementary operations on their seed. 
Our notation is close to the one of
Terlutte and Simplot \cite{TerlutteSim00}.
Given a relation $R \subseteq \Gamma^* \times \Gamma^*$, the inverse
relation is $R^{-1} := \{ (x, y) \in \Gamma^* \times \Gamma^* ~|~ (y, x) \in R
\}$. The composition of $R$ with a relation
$R' \in \Gamma^*$ is 
$ 
R  R' := \{ (x, y) \in (\Gamma \times \Gamma)^*~|~(x,z) \in R \text{ and }
(z,y) \in R' \text{ for some } z \in \Gamma^* \}.
$
For a language $L \subseteq \Gamma^*$, we write 
$R L := \{x \in \Gamma^* ~|~ (x,y) \in R~\text{and}~y \in L \}$. 
For a subalphabet $\Sigma \subseteq \Gamma$, we 
denote the identity relation by $(\cap\Sigma^*) := \{ (x, x) \in \Sigma^* \times \Sigma^*
\}$. 
The power $R^k$ of $R$ is defined by $R^0 :=
(\cap\Gamma)$, and $R^{k+1} := R^k R$ for $k > 0$. Finally, the  
iteration of $R$ is $\displaystyle{R^* := \cup_{0 \le k < \omega} R^k}$.

One significant relation obtained from the seed transducer~$R$ of a game~$G$ is the
\emph{reflection} relation $\tau( R ) := {RR}^{-1}$. 
That is, a word $w \in \Gamma^*$ over the observation alphabet is a reflection
of $u \in \Gamma^*$ if, whenever player~$1$ observes $u$, player~$2$ considers it possible
that $1$ actually observes $w$. Obviously, this relation is reflexive
and symmetric. Its transitive closure relates observations received on
connected plays.
 
\begin{lemma}\label{lem:connected-iterated} 
Let $G$ be a consensus game with seed relation~$R \subseteq \Gamma^* \times
\Gamma^*$, and let $\tau := RR^{-1}$ be its reflection. Then,
\begin{enumerate}[(i)]
\item $\tau = \{\, (\, \beta^1( \pi
  ), \beta^1( \pi')) ~|~ \text{plays}~\pi \sim^2 \pi'~\text{in } G \,\}$, and
\item
$\tau^* = (\cap \Gamma^*) \cup \{\, (\beta^1( \pi
  ), \beta^1( \pi')) ~|~ \text{plays}~\pi \sim^* \pi'~\text{in } G \,\}$.
\end{enumerate}
\end{lemma}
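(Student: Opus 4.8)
The plan is to prove (i) by unfolding the definitions of the relational operations and the seed, and then to derive (ii) from (i) by translating $\tau$-paths into connecting chains of plays and back. For (i), I would start from the definition of composition: a pair $(x,y)$ lies in $\tau = R R^{-1}$ precisely when there is some $z \in \Gamma^*$ with $(x,z) \in R$ and $(z,y) \in R^{-1}$, that is, with $(x,z) \in R$ and $(y,z) \in R$. By the definition of the seed relation this means there are plays $\pi, \pi'$ with $\beta^1(\pi) = x$, $\beta^1(\pi') = y$ and $\beta^2(\pi) = z = \beta^2(\pi')$; the last equality is exactly the condition $\pi \sim^2 \pi'$. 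This yields (i) directly, and I would note in passing that synchrony of $R$ forces $\pi$ and $\pi'$ to have equal length, consistently with $\tau \subseteq (\Gamma \times \Gamma)^*$.

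For (ii) the key observation is that a $\sim^1$ step is invisible to player~$1$: if $\pi \sim^1 \pi'$ then $\beta^1(\pi) = \beta^1(\pi')$. Hence along a connecting chain $\pi \sim^1 \rho_1 \sim^2 \rho_2 \sim^1 \dots \sim^2 \pi'$ the value $\beta^1$ stays constant across every $\sim^1$ step and advances by exactly one $\tau$-edge across every $\sim^2$ step, by (i). Reading off these values and collapsing the equalities produces a $\tau$-path from $\beta^1(\pi)$ to $\beta^1(\pi')$ whose length is the number of $\sim^2$ steps, so $(\beta^1(\pi), \beta^1(\pi')) \in \tau^m \subseteq \tau^*$; when there are no $\sim^2$ steps this lands in $\tau^0 = (\cap \Gamma^*)$. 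Together with $\tau^0 = (\cap \Gamma^*) \subseteq \tau^*$, this proves the inclusion from right to left.

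For the converse $\tau^* \subseteq (\cap \Gamma^*) \cup \{(\beta^1(\pi), \beta^1(\pi')) \mid \pi \sim^* \pi'\}$, I would induct on $k$ for $(x,y) \in \tau^k$. The case $k=0$ lands in $(\cap \Gamma^*)$, and for $k \geq 1$ a $\tau$-path $x = x_0, x_1, \dots, x_k = y$ yields, via (i), play-pairs $\pi_j \sim^2 \pi'_j$ with $\beta^1(\pi_j) = x_j$ and $\beta^1(\pi'_j) = x_{j+1}$. The main step is the gluing: since $\beta^1(\pi'_j) = x_{j+1} = \beta^1(\pi_{j+1})$, consecutive pairs share player~$1$'s observation, hence $\pi'_j \sim^1 \pi_{j+1}$, and concatenating gives a single chain $\pi_0 \sim^2 \pi'_0 \sim^1 \pi_1 \sim^2 \dots \sim^2 \pi'_{k-1}$ witnessing $\pi_0 \sim^* \pi'_{k-1}$ with the required endpoints. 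I expect this gluing to be the only genuinely non-routine point, as it is where one must invoke that distinct plays with equal player-$1$ observations are $\sim^1$-related and where one pads the chain with reflexive steps to match the alternating form defining $\sim^*$. Finally, I would flag that the explicit summand $(\cap \Gamma^*)$ in the statement is precisely the contribution of $\tau^0$ on observations not realised by any play, ensuring the two set descriptions agree on all of $\Gamma^*$ rather than only on realisable observations.
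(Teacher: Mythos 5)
Your proposal is correct and follows essentially the same route as the paper's own proof: part (i) by unfolding the definitions of composition, inverse, and the seed relation, and part (ii) by converting connecting chains of plays into $\tau$-paths (with $\beta^1$ constant across $\sim^1$ steps) and, conversely, by induction on the power $k$, gluing the play-pairs from (i) via $\sim^1$ steps and reflexive padding. The points you flag as delicate (plays with equal player-$1$ observations being $\sim^1$-related, and the identity summand covering unrealised observations) are exactly the ones the paper's proof also relies on.
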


\begin{proof}
$(i)$ By our definition of the seed relation,
for every pair of words $(w, w') \in RR^{-1}$, there exist plays $\pi, \pi'$ 
such that 
$\beta^1( \pi ) = w $, $\beta^1( \pi' ) = w'$, and $\beta^2( \pi ) =
\beta^2( \pi' )$, that is, $\pi \sim^2 \pi'$. 
Conversely, for any pair of plays $\pi \sim^2 \pi'$, the observation
sequences are related by 
$(\beta^1( \pi ), \beta^2( \pi ))\in R$
and $(\beta^2( \pi ), \beta^1( \pi' )) = (\beta^2( \pi' ), \beta^1(
\pi' )) \in R^{-1}$. Hence $(\beta^1( \pi ), \beta^1( \pi') )\in
RR^{-1}$. 

($ii$ ``$\supseteq$''): Clearly, $(\cap\Gamma^*) \subseteq \tau^*$.
Further, by definition of connectedness, if $\pi \sim^* \pi'$, then there exists a sequence of  
plays $(\pi_\ell)_{\ell \le 2k}$ with $\pi_0 = \pi$, 
$\pi_{2k} = \pi'$, and $\pi_{\ell} \sim^1
\pi_{\ell+1} \sim^2 \pi_{\ell+2}$ for all even $\ell < 2k$.
Therefore, the observation sequences 
\begin{align*} 
x := \beta^1( \pi_{\ell} ) = \beta^1(\pi_{\ell + 1}), \quad 
y := \beta^1( \pi_{\ell + 2} ), \quad \text{and} \quad z := \beta^2( \pi_{\ell + 1} ) =
\beta^2( \pi_{\ell + 2})
\end{align*} are
related by $(x, z) \in R$ and 
$(z, y) \in R^{-1}$, which means that  
$(x, y ) = (\beta^1( \pi_{\ell} ), \beta^1 ( \pi_{\ell+2} ) ) \in R R^{-1}$, for all
even $\ell < 2k$. Accordingly, we obtain
$(\beta^1( \pi ), \beta^1(\pi' )) \in (R R^{-1})^*$. 

($ii$ ``$\subseteq$'') To show that every pair of distinct words in $\tau^*$ can be observed
by player~$1$ on connected plays, 
 we verify by induction on the power $k \ge 1$ that
for every pair $(w, w') \in (RR^{-1})^k$, 
there exists a sequence of plays 
$\pi_0 \sim^1 \pi_1 \sim^2  \dots \sim^2 \pi_{2k}$ 
such that $\beta^1( \pi_0 ) = w$ and 
$\beta^1( \pi_{2k} ) = w'$.

The base case for $k = 1$ follows from point~$(i)$ of the present
lemma: if $(w, w') \in (RR^{-1})$, then there exist $\pi \sim^2 \pi'$ with $\beta^1( \pi ) = w$,
$\beta^1( \pi' ) = w'$, and we set $\pi_0 = \pi_1 = \pi$ and $\pi_2 = \pi'$.
For the induction step, assume that the hypothesis holds for a power $k \ge
1$ and consider
$(w, w') \in (RR^{-1})^{k+1}$. That is, there exists a word $z \in \Gamma^*$ 
such that $(w, z) \in (RR^{-1})^k$ and
$(z, w') \in (RR^{-1})$. The former implies, by induction hypothesis, that we
have a chain $\pi_0 \sim^1 \pi_1 \sim^2 \dots \sim^1 \pi_{2k-1} \sim^2 \pi_{2k}$ 
with $\beta^1( \pi_0 ) = w$ and $\beta^1( \pi_{2k} ) = z$; from the
latter it follows, by definition of $R$, that there exist plays
$\pi$ and $\pi'$ with 
$\beta^1( \pi ) = z = \beta^1( \pi_{2k} )$, 
$\beta^2( \pi ) = \beta^2( \pi' )$, and
$\beta^1( \pi' ) = w'$.
Therefore, we can prolong the witnessing chain by 
setting $\pi_{2k+1} := \pi$, $\pi_{2k+2} := \pi'$, which concludes the
induction argument.
\end{proof}

The consensus condition requires decisions to be invariant under 
the reflection relation. 
This yields the following characterisation of
winning strategies.

\begin{lemma}\label{lem:char-winning}
Let $G$ be a consensus game with seed $(R, L_\acc, L_\rej)$ over an alphabet 
$\Gamma$, and let $\tau := RR^{-1}$ be its reflection relation.
Then, a strategy~$s:\Gamma^* \to \{0, 1\}$ of player~$1$ is winning if,
and only if, it
assigns $s( w ) = 1$ to every observation sequence $w \in \tau^*L_\acc$
and $s( w ) = 0$ to every observation sequence $w \in \tau^*L_\rej$.
\end{lemma}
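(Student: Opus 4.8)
The plan is to reduce the notion of a winning strategy to the safety characterisation recorded earlier in the subsection on strategies and knowledge, and then to translate that characterisation into the language of $\tau^*$ by means of Lemma~\ref{lem:connected-iterated}. Recall that a consensus strategy is winning if, and only if, it prescribes a safe decision at every play, where a decision $a \in \{0,1\}$ is safe at a play $\pi$ precisely when $a \in \Omega( \pi' )$ for all $\pi' \sim^* \pi$. Reading this contrapositively, the decision $0$ is \emph{unsafe} at $\pi$ exactly when $\pi$ is connected to some play $\pi'$ whose only admissible decision is $1$, that is, with $\Omega( \pi' ) = \{1\}$; symmetrically, $1$ is unsafe at $\pi$ exactly when $\pi$ is connected to a play with $\Omega( \pi' ) = \{0\}$. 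Since a strategy is identified with a function of the first player's observations, everything should be expressed through the observation sequence $\beta^1( \pi )$.

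The heart of the argument is the following translation, which I would isolate as a claim: for every play $\pi$ with $w = \beta^1( \pi )$, the decision $0$ is unsafe at $\pi$ if, and only if, $w \in \tau^* L_\acc$, and dually $1$ is unsafe if, and only if, $w \in \tau^* L_\rej$. One direction is immediate from Lemma~\ref{lem:connected-iterated}(ii): if $\pi \sim^* \pi'$ with $\Omega( \pi' ) = \{1\}$, then $\beta^1( \pi' ) \in L_\acc$ and $(w, \beta^1( \pi' )) \in \tau^*$, so $w \in \tau^* L_\acc$. For the converse one starts from $(w, y) \in \tau^*$ with $y \in L_\acc$ and must produce a concrete play connected to $\pi$ on which $0$ is forbidden. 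Here I would use that $L_\acc$ is witnessed by some play $\pi''$ with $\beta^1( \pi'' ) = y$ and $\Omega( \pi'' ) = \{1\}$, and that any two plays sharing the first player's observation are $\sim^1$-indistinguishable: by Lemma~\ref{lem:connected-iterated}(ii) the pair $(w,y)$ is realised by a chain of plays running from a play with observation $w$ to a play with observation $y$, and this chain can be prolonged by the $\sim^1$-steps $\pi \sim^1 (\text{start})$ and $(\text{end}) \sim^1 \pi''$, yielding $\pi \sim^* \pi''$. The degenerate case arising from the diagonal $(\cap\Gamma^*)$, namely $w = y \in L_\acc$, is handled by the single step $\pi \sim^1 \pi''$. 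The case of $L_\rej$ is entirely symmetric.

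With the claim in hand, the equivalence follows. For the forward implication, let $s$ be winning and let $w \in \tau^* L_\acc$; every play $\pi$ with $\beta^1( \pi ) = w$ then has $0$ unsafe, so the safe decision prescribed by $s$ can only be $1$, whence $s( w ) = 1$, and symmetrically $s( w ) = 0$ on $\tau^* L_\rej$. For the converse, suppose $s$ obeys the two assignment rules and consider any play $\pi$ with $w = \beta^1( \pi )$; if the prescribed decision $s( w ) = 1$ were unsafe at $\pi$, the claim would give $w \in \tau^* L_\rej$ and hence $s( w ) = 0$, a contradiction, so $s( w )$ is safe, and the symmetric argument covers $s( w ) = 0$. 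As every prescribed decision is safe, $s$ is winning.

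The main obstacle I expect is the converse half of the translation claim, because $\tau^*$ relates observation \emph{words} whereas safety is a statement about connected \emph{plays}: one must reconstruct an explicit connected chain of plays from a purely relational membership $w \in \tau^* L_\acc$, and the $\sim^1$-gluing of plays that share the first player's observation is exactly what bridges this gap. A secondary point to keep in mind is the consensus condition: strictly, a winning observation-based strategy must also be constant on connected classes, but since $\tau^* L_\acc$ and $\tau^* L_\rej$ are unions of $\tau^*$-classes, the two assignment rules are consistent with that requirement and leave the strategy free only on observation sequences where both decisions are safe.
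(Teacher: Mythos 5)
Your forward direction and your explicit ``translation claim'' (decision $0$ is unsafe at a play $\pi$ if, and only if, $\beta^1(\pi) \in \tau^* L_\acc$, and dually for decision $1$) are correct, and the $\sim^1$-gluing you use to turn the relational membership $w \in \tau^* L_\acc$ into a concrete chain of connected plays is exactly the right bridge; this part matches what the paper does implicitly via Lemma~\ref{lem:connected-iterated}(ii).

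The genuine gap is in your converse direction, at the closing step ``as every prescribed decision is safe, $s$ is winning.'' The safety criterion you invoke from the subsection on strategies and knowledge applies to \emph{consensus} strategies, i.e.\ to functions on plays that are constant on $\sim^*$-classes; only under that hypothesis is ``all prescribed decisions are safe'' equivalent to winning. A strategy obeying your two assignment rules need not be such a function: on a connected class disjoint from $\tau^* L_\acc \cup \tau^* L_\rej$ the rules impose nothing, so $s$ may assign different values to plays $\pi \sim^2 \pi'$ of that class with $\beta^1(\pi) \neq \beta^1(\pi')$. Every decision prescribed by such an $s$ is safe, yet $s$ is not winning: player~$2$'s strategy returns a single value on $\beta^2(\pi) = \beta^2(\pi')$, so it cannot agree with $s$ on both plays, and the joint strategy loses by disagreement even though both decisions are admissible there. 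You flagged this point yourself but dismissed it incorrectly --- being ``free'' on those sequences is exactly what breaks the equivalence, since consensus still forces constancy on each class. (Read literally, the lemma shares this looseness; the paper's own proof quietly proves the statement that is actually needed.) Indeed, the paper handles the converse by exhibiting one canonical strategy, $s(\pi) = 1$ iff $\beta^1(\pi) \in \tau^* L_\acc$, and checking explicitly that it is invariant under $\sim^2$, hence a valid consensus strategy, which is winning precisely when $\tau^* L_\acc \cap \tau^* L_\rej = \emptyset$; that weaker claim is all that Theorem~\ref{thm:char-iterated} uses. To repair your argument, either restrict the converse to this canonical strategy, or strengthen the hypothesis to require that $s$ is in addition constant on $\tau^*$-classes of realisable observation sequences.
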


\begin{proof}
(``\!$\implies$\!'') According to Lemma~\ref{lem:connected-iterated}$(ii)$, every word $w \in \tau^*L_\acc$
corresponds to the observation sequence $\beta^1( \pi ) = w$ of a
play~$\pi$ in $G$, and there exists a connected play $\pi' \sim^* \pi$ with $\pi'
\in L_\acc$. Therefore, any winning strategy 
$s: \Gamma^* \to \{ 0, 1 \}$ for player~$1$ must assign $s( w ) = 1$ to
all $w \in L_\acc$ 
and further, to all $w \in \tau^*L_\acc$, by the conditions of
consensus and indistinguishability.  
Likewise, it follows that $s( w ) = 0$ for all $w \in \tau^*L_\rej$. 

(``\kern-.4em$\impliedby$\kern-.45em'')
Consider the mapping $s: V^* \to \{0, 1\}$ with
$s( \pi ) = 1$ if, and only if, $\beta^1( \pi ) \in \tau^* L_{\acc}$.
Then~$s$ is a valid strategy, as for all $\pi \sim^2 \pi'$ we have 
$(\beta^1( \pi ), \beta^1( \pi' )) \in \tau$ and $(\beta^1( \pi'), \beta^1( \pi )) \in \tau$, hence
 $\beta^1( \pi ) \in \tau^* L_{\acc}$ if, and only if, $\beta^1(\pi') \in \tau^*
 L_{\acc}$. 
If it is the case that $s(\pi) = 0$  for
all $\pi \in \tau^*L_\rej$, that is, $\tau^* L_\acc \cap \tau^* L_\rej =
\emptyset$, 
then $s$ is a winning strategy.
\end{proof}

As a direct consequence, we can characterise the language
defined by a game in terms of iterated transductions.

\begin{theorem}\label{thm:char-iterated}
Let $G$ be a consensus game with seed $(R, L_\acc, L_\rej)$, and let $\Sigma$ be
a subset of its alphabet. Then, for the reflection $\tau := RR^{-1}$,
we have: 
\begin{enumerate}[(i)]
\item $G$ is solvable if, and only if, $\tau^* L_\acc \cap \tau^* L_\rej = \emptyset$. 
\item If $G$ is solvable, then it covers the language $ (\cap \Sigma^*) \tau^*
  L_\acc$.
\item If $G$ is solvable and $(\cap \Sigma^*) (\tau^* L_\acc \cup \tau^*
  L_\rej) = \Sigma^*$, then~$G$ characterises the language $ (\cap \Sigma^*) \tau^*
  L_\acc$.
\end{enumerate}
\end{theorem}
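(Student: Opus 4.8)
The plan is to read off all three statements directly from the characterisation of winning strategies in Lemma~\ref{lem:char-winning}, after unfolding the notation. By the definition of composing a language with the identity relation, $(\cap\Sigma^*)\,\tau^* L_\acc$ is simply $\Sigma^* \cap \tau^* L_\acc$, and likewise $(\cap\Sigma^*)(\tau^* L_\acc \cup \tau^* L_\rej) = \Sigma^* \cap (\tau^* L_\acc \cup \tau^* L_\rej)$. I would fix the abbreviation $L := \Sigma^* \cap \tau^* L_\acc$ and keep in mind throughout that $L(s)$ is taken relative to the terminal alphabet, so $L(s) = \{\, w \in \Sigma^* ~|~ s(w) = 1 \,\}$.

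For part~(i), if $G$ is solvable I would fix any winning strategy $s$; by Lemma~\ref{lem:char-winning} it assigns $1$ to every word of $\tau^* L_\acc$ and $0$ to every word of $\tau^* L_\rej$, so a word lying in both sets would have to receive both values, which is impossible, forcing $\tau^* L_\acc \cap \tau^* L_\rej = \emptyset$. Conversely, when the two sets are disjoint, the mapping $s$ with $s(w) = 1$ exactly when $w \in \tau^* L_\acc$ is winning --- this is precisely the construction in the ``if''-direction of Lemma~\ref{lem:char-winning} --- so $G$ is solvable.

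For part~(ii), assuming $G$ solvable, I would verify the two defining conditions of covering. For $L \subseteq L(s)$ for \emph{every} winning strategy $s$: any such $s$ assigns $1$ to all of $\tau^* L_\acc$ by Lemma~\ref{lem:char-winning}, hence to every $w \in \Sigma^* \cap \tau^* L_\acc = L$, and since each such $w$ lies in $\Sigma^*$ it belongs to $L(s)$. For $L = L(s)$ for \emph{some} winning strategy: take the canonical strategy with $s(w) = 1$ iff $w \in \tau^* L_\acc$, which is winning by part~(i) together with Lemma~\ref{lem:char-winning}; then $L(s) = \{\, w \in \Sigma^* ~|~ w \in \tau^* L_\acc \,\} = L$.

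For part~(iii), the only thing to add on top of covering is the reverse inclusion $L(s) \subseteq L$ for every winning strategy $s$. I would take an arbitrary winning $s$ and a word $w \in L(s)$, so that $w \in \Sigma^*$ and $s(w) = 1$. The hypothesis says every word of $\Sigma^*$ lies in $\tau^* L_\acc \cup \tau^* L_\rej$; were $w \in \tau^* L_\rej$, Lemma~\ref{lem:char-winning} would force $s(w) = 0$, a contradiction, so $w \in \tau^* L_\acc$ and hence $w \in L$. Combined with part~(ii) this gives $L(s) = L$ for all winning strategies, which is exactly characterisation. The argument carries no real technical obstacle beyond Lemma~\ref{lem:char-winning}; the one point demanding care is the bookkeeping of the identity-relation notation and the fact that all membership claims about $L(s)$ must be checked against $\Sigma^*$ rather than the full alphabet $\Gamma$.
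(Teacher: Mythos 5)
Your proof is correct and takes exactly the route the paper intends: the paper states this theorem as ``a direct consequence'' of Lemma~\ref{lem:char-winning} and gives no separate proof, and your argument---unfolding $(\cap\Sigma^*)\,\tau^*L_\acc$ as $\Sigma^*\cap\tau^*L_\acc$, reading off solvability from disjointness, and checking the covering/characterisation conditions against the canonical strategy $s(w)=1$ iff $w\in\tau^*L_\acc$---is precisely that consequence spelled out. The care you take that $L(s)$ is defined relative to $\Sigma^*$ rather than the full observation alphabet matches the paper's conventions.
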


\section{Games for context-free languages}

Properties of iterated letter-to-letter transductions, or
equivalently, length-preserving transductions, have been investigated
by Latteux, Simplot, and Terlutte
in~\cite{LatteuxEtAl1998,TerlutteSim00}, where it is also shown that iterated synchronous transductions 
capture context-sensitive languages. Our setting is, however, more restrictive in that
games correspond to symmetric transductions.
In the following, we investigate a family of consensus game acceptors
that captures context-free languages.
Since the class is not closed under complement, 
we will work with the weaker notion of
covering a language rather
than characterising it. For the language-theoretic discussion, 
we generally assume 
that the rejecting seed language $L_\rej$ is empty
and specify the seed $(R, L_\acc, \emptyset)$ as $(R, L_\acc)$.


Firstly, we remark that regular languages correspond to games where
the two players have the same
observation function.
Clearly, such games admit regular winning strategies whenever they are solvable.

\begin{proposition}
A language~$L \subseteq \Sigma^*$ is regular if, and only if, it is
characterised by a consensus game acceptor where the seed relation is the identity.
\end{proposition}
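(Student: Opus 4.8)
The plan is to prove both directions of this equivalence using the machinery developed earlier, particularly Theorem~\ref{thm:char-iterated} and Lemma~\ref{lem:char-winning}. The key observation is that when $\beta^1 = \beta^2$, the two players always receive identical observations, so the seed relation $R$ is contained in the identity, and consequently the reflection $\tau = RR^{-1}$ collapses to $(\cap\Gamma^*)$ restricted to the observation sequences that actually occur.

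For the direction ``if the seed relation is the identity, then $L$ is regular'': I would start from a game $G$ whose seed relation is $(\cap\Sigma^*)$, meaning every play $\pi$ satisfies $\beta^1(\pi) = \beta^2(\pi)$, so the only plays indistinguishable to a player are those yielding the same observation word. Then $\tau^* = (\cap\Sigma^*)$, so by Theorem~\ref{thm:char-iterated}$(i)$ solvability reduces to the disjointness $L_\acc \cap L_\rej = \emptyset$, and by part~$(iii)$ the characterised language is simply $(\cap\Sigma^*)L_\acc = L_\acc \cap \Sigma^*$. Since the seed language $L_\acc$ is regular by the earlier lemma, the characterised language is regular. Here I must also confirm solvability: when $\beta^1 = \beta^2$ the indistinguishability relation $\sim^*$ identifies exactly the plays with equal observation, so a play is problematic only if a single observation word admits both an accepting and a rejecting final state, i.e.\ $L_\acc \cap L_\rej \neq \emptyset$; for a well-formed game characterising a fixed language this does not happen.

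For the converse direction ``if $L$ is regular, then it is characterised by such a game'': given a regular $L \subseteq \Sigma^*$, I would take a deterministic finite automaton $\mathcal{A}$ recognising $L$ and build a game in which both observation functions coincide. Concretely, I set $L_\acc := L$, $L_\rej := \Sigma^* \setminus L$ (both regular), and use the construction from the earlier lemma that turns a synchronous relation and two disjoint regular languages into a game with prescribed seed, instantiating the relation as the identity $(\cap\Sigma^*)$. This yields a game where $\beta^1 = \beta^2$, the seed languages are $L$ and its complement, and by Theorem~\ref{thm:char-iterated}$(iii)$ — whose hypothesis $(\cap\Sigma^*)(\tau^* L_\acc \cup \tau^* L_\rej) = \Sigma^*$ holds because $L_\acc \cup L_\rej = \Sigma^*$ — the game characterises $(\cap\Sigma^*)\tau^* L_\acc = L$.

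The main obstacle, though a mild one, is being precise about the relationship between ``the seed relation is the identity'' as stated and the slightly stronger operational fact that $\beta^1(\pi) = \beta^2(\pi)$ on every play: I need to argue that these coincide for the games under consideration, using the assumption (from the section preamble) that $L_\rej$ may be taken empty or, for characterisation, that the accepting and rejecting languages partition $\Sigma^*$. The remaining steps are routine applications of the translation lemmas and Theorem~\ref{thm:char-iterated}, so the real content is just checking that the identity seed relation forces $\tau^*$ to act trivially, after which regularity of the characterised language follows immediately from regularity of the seed languages.
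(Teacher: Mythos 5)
Your proposal is correct and follows essentially the same route as the paper: the forward direction builds exactly the paper's game with seed $(\,(\cap\Sigma^*),\, L,\, \Sigma^*\setminus L\,)$ via the relation-and-automata-to-game lemma, and the backward direction extracts $L = \Sigma^* \cap L_\acc$ from Theorem~\ref{thm:char-iterated}, using regularity of seed languages.

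One wrinkle is worth flagging. In the backward direction you invoke part~$(iii)$ of Theorem~\ref{thm:char-iterated}, whose hypothesis $(\cap\Sigma^*)(\tau^*L_\acc \cup \tau^*L_\rej) = \Sigma^*$ you propose to discharge by an ``assumption from the section preamble'' that the accepting and rejecting languages partition $\Sigma^*$ --- no such assumption exists there (the preamble only permits taking $L_\rej$ empty in the covering discussion, which is the opposite of what you need). The paper sidesteps this entirely by using part~$(ii)$: characterisation implies covering, a solvable game covers a unique language over $\Sigma$, hence $L = (\cap\Sigma^*)\tau^*L_\acc = \Sigma^*\cap L_\acc$ with no totality hypothesis to check. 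Your route can be repaired without part~$(ii)$, since characterisation does force the hypothesis of $(iii)$: if some $w \in \Sigma^+$ lay outside $\tau^*L_\acc \cup \tau^*L_\rej$, both decisions would be safe on its (singleton, since $\tau$ is the identity) connected component, so winning strategies could disagree at $w$, contradicting that all winning strategies determine the same language. But that argument has to be made explicitly rather than attributed to a standing assumption.
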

\begin{proof}
Every regular language~$L \subseteq \Sigma^*$ is
characterised by the game with seed $(\,(\cap \Sigma^*), L, \Sigma^*
\setminus L\,)$. 
Conversely, suppose a language~$L \subseteq \Sigma^*$ is
characterised by a game~$G$ over an alphabet~$\Gamma
\supseteq \Sigma$ with seed $(\, (\cap \Gamma^*), L_\acc, L_\rej \,)$. 
Then, $G$ also covers~$L$ and, 
by Theorem~\ref{thm:char-iterated}$(ii)$, it follows that  $L = (\cap \Sigma^*) 
(\cap \Gamma^*)^*L_\acc = \Sigma^* \cap L_\acc$. 
Hence, $L$ is regular. 
\end{proof}

\subsection{Dyck languages.}\label{para:Dyck}
As a next exercise, let us construct games for covering Dyck languages,
that is, languages of well-balanced words of brackets; 
we also allow neutral symbols, which may appear at any position
without affecting the bracket balance.
Our terminal alphabet~$\Lambda$ consists of an alphabet $B_n = \{~ [_k, ]_k~|~ 1 \le
k \le n ~\}$ of $n \ge 1$ matching brackets and a set $C$~of neutral symbols. 
For a word $w \in \Lambda^*$ and an index $k$, we denote by $\mathrm{excess}^k( w )$ 
the difference between the number of opening and of closing brackets
$[_k$ and $]_k$. 
Then, the Dyck language~$\DA$ over~$\Lambda$ consists of the words $w
\in \Lambda^*$ such
that, for each kind of brackets $k \in \{ 1,
\dots, k \}$, $\mathrm{excess}^k( w ) = 0$, whereas
for all
prefixes~$w'$ of~$w$,
$\mathrm{excess}^k( w' ) \ge 0$. 

Given a terminal alphabet $\Lambda = B_n \cup C$, we define the
transducer $R_{n,C}$ over the observation alphabet
$\Gamma := A \cup \{ \Box \}$ with a set of states
$\{q_0, q_1, \dots, q_n \}$ among which $q_0$ is the initial and the only final
state, and with the following two kinds of transitions:
\emph{copying} transitions 
$q_0 \xrightarrow{a|a} q_0$ for all $a \in \Gamma$, 
as well as $q_k \xrightarrow{\Box|\Box} q_k$ for 
every $k \in \{1, \dots n \}$, and 
\emph{erasing} transitions $q_0 \xrightarrow{[_k|\Box} q_k$ and 
$q_k \xrightarrow{]_k|\Box} q_0$ for the brackets of each kind $k$, and 
$q_0 \xrightarrow{c|\Box} q_0$ for each neutral symbol $c \in C$. 
Essentially, $R_{n, C}$ erases neutral symbols and 
any innermost pair of brackets.

\begin{lemma}\label{lem:Dyck}
The Dyck language over an alphabet~$n$ of matching brackets 
and a set~$C$ of neutral symbols is covered by the game with seed
$(R_{n,C}, \Box^*)$. 
\end{lemma}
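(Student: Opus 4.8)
The plan is to apply Theorem~\ref{thm:char-iterated}$(ii)$, which reduces the covering claim to showing two things about the reflection relation $\tau := R_{n,C}R_{n,C}^{-1}$: first that the game is solvable, and second that the covered language $(\cap \Lambda^*)\tau^* \Box^*$ equals the Dyck language $\DA$. Since $L_\rej = \emptyset$ here, solvability is immediate because $\tau^* L_\rej = \emptyset$, so the emptiness condition $\tau^* L_\acc \cap \tau^* L_\rej = \emptyset$ of Theorem~\ref{thm:char-iterated}$(i)$ holds trivially. Thus the entire task is to identify the language $(\cap \Lambda^*)\tau^* \Box^*$, i.e. to determine which terminal words $w \in \Lambda^*$ satisfy $(w, \Box^{|w|}) \in \tau^*$.

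First I would understand the single application of the transducer $R_{n,C}$ as a relation on $\Gamma^*$. Reading off its transitions, $R_{n,C}$ recognizes pairs $(u, u')$ of equal length where $u'$ is obtained from $u$ by selecting one innermost matched bracket pair $[_k \dots ]_k$ (with only $\Box$'s strictly between them, corresponding to the $q_k \xrightarrow{\Box|\Box} q_k$ loop) and replacing both brackets by $\Box$, or by replacing a single neutral symbol $c \in C$ by $\Box$, while copying all other positions. Next I would compute the reflection $\tau = R_{n,C} R_{n,C}^{-1}$: a pair $(x, y) \in \tau$ holds exactly when there is a common $z$ with $(x,z), (y,z) \in R_{n,C}$, meaning $x$ and $y$ are two preimages of the same word $z$ under one erasing step. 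I would then argue that the key structural fact is monotonicity in the number of $\Box$ symbols: since every application of $R_{n,C}$ strictly increases the $\Box$-count (it never introduces brackets or neutral symbols), $\tau$ relates words with comparable $\Box$-structure, and iterating $\tau$ lets us erase matched pairs and neutral symbols in any order until reaching $\Box^{|w|}$.

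The heart of the proof is the two inclusions for $(\cap \Lambda^*)\tau^*\Box^* = \DA$. For $\DA \subseteq (\cap \Lambda^*)\tau^*\Box^*$, I would proceed by induction on the length of a word $w \in \DA$: a nonempty well-balanced word either contains a neutral symbol or an innermost bracket pair, so applying $R_{n,C}$ once yields a word $z$ with one more $\Box$ that remains (after ignoring its new $\Box$) a shorter Dyck word, and $(w,w) \sim$-relates through $z$ to itself, letting me chain $\tau$-steps down to $\Box^{|w|}$. More carefully, I would show directly that for $w \in \DA$, the pair $(w, \Box^{|w|})$ lies in $R_{n,C}^{|w|/2 + \#\text{neutrals}}$, which sits inside $\tau^*$ because $R \subseteq RR^{-1} = \tau$ whenever the identity is available on $\Box$-runs; alternatively I would exploit that $(w, \Box^{|w|}) \in R_{n,C}^*$ and that $R_{n,C}^* \subseteq \tau^*$ since $\Box^*$ is $R_{n,C}$-closed. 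For the reverse inclusion $(\cap \Lambda^*)\tau^*\Box^* \subseteq \DA$, I would verify that each $\tau$-step preserves the invariant ``$w$ has the same bracket excess profile as a Dyck word'': because erasing an innermost pair or a neutral symbol does not disturb balancedness, any $w \in \Lambda^*$ reachable via $\tau^*$ from $\Box^{|w|}$ must itself be well-balanced with all prefix excesses nonnegative.

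The main obstacle I anticipate is the reverse inclusion, specifically controlling the reflection relation $\tau = R_{n,C}R_{n,C}^{-1}$ rather than $R_{n,C}$ alone. A single $\tau$-step does not simply erase a bracket pair from $w$; it passes through a common image $z$ and can, in principle, \emph{change} brackets (two different innermost pairs in $x$ and in $y$ collapse to the same $\Box$-positions in $z$). I would therefore need a careful invariant that is genuinely preserved by $\tau$ and not merely by $R_{n,C}$ --- namely that the multiset of unerased terminal symbols, read in order, forms a valid balanced configuration --- and argue that the symmetric two-sided step can only exchange $x$ and $y$ between configurations that are both Dyck-consistent with the same $z$. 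Establishing that this invariant is both $\tau$-closed and correctly pinned down at the base point $\Box^*$ is the delicate part; once it is in place, Theorem~\ref{thm:char-iterated}$(ii)$ delivers the covering claim.
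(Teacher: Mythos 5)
Your skeleton matches the paper's proof: solvability is trivial because $L_\rej = \emptyset$, the forward inclusion $\DA \subseteq \tau^*\Box^*$ goes by iterated erasure together with the fact that $R_{n,C}$ contains the identity (hence $R_{n,C} \subseteq R_{n,C}R_{n,C}^{-1} = \tau$), and the reverse inclusion needs an invariance argument. But the reverse inclusion is exactly where your proposal stops: you flag it as ``the delicate part,'' sketch an invariant only vaguely, and frame the difficulty incorrectly. You say you need an invariant ``genuinely preserved by $\tau$ and not merely by $R_{n,C}$''; in fact no such thing is needed. The missing idea is to take the invariant to be membership in $\DGamma$, the Dyck language over the \emph{extended} alphabet $\Gamma$ in which $\Box$ is one more neutral symbol, and to prove that a \emph{single} application of the transducer preserves it \emph{in both directions}: for every $(u,w) \in R_{n,C} \cup R_{n,C}^{-1}$, one has $u \in \DGamma$ if, and only if, $w \in \DGamma$. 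This is a statement about one accepting run, proved by tracking the prefix excesses $\mathrm{excess}^k$: input and output excesses agree while the run sits in $q_0$; when it takes $q_0 \xrightarrow{[_k|\Box} q_k$ they diverge by one, but since $q_k$ is not final the run must later take $q_k \xrightarrow{]_k|\Box} q_0$, and in between only $\Box|\Box$ is read, so the excesses re-synchronize and nonnegativity of all prefix excesses holds on one tape iff it holds on the other. Once this biconditional single-step invariance is in hand, preservation under $\tau = R_{n,C}R_{n,C}^{-1}$ and under $\tau^*$ is purely formal, and your worry about two different innermost pairs in $x$ and $y$ collapsing onto the same image $z$ evaporates: $x \in \DGamma \iff z \in \DGamma \iff y \in \DGamma$. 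Since $\Box^* \subseteq \DGamma$ and $\DGamma \cap \Lambda^* = \DA$, the reverse inclusion follows. Without this step your proof of the lemma's harder half simply does not exist.

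Two secondary errors would also need repair. First, your reading of $R_{n,C}$ as erasing exactly one innermost bracket pair or one neutral symbol per application is wrong: an accepting run returns to $q_0$ after each erasure, so a single application can erase many innermost pairs (of different kinds) and many neutral symbols at once --- the paper's forward argument even relies on erasing \emph{all} neutral symbols in one pass. Second, ``every application of $R_{n,C}$ strictly increases the $\Box$-count'' is false, precisely because $R_{n,C}$ contains the identity (copying transitions $q_0 \xrightarrow{a|a} q_0$ for every $a \in \Gamma$); far from being a defect, that containment is what yields $R_{n,C} \subseteq \tau$, which your forward inclusion uses, whereas your alternative justification ``$R_{n,C}^* \subseteq \tau^*$ since $\Box^*$ is $R_{n,C}$-closed'' is a non sequitur.
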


\begin{proof}
  For a terminal alphabet~$\Lambda$ with partitions $B_n, C$ as in the statement, we denote
  the corresponding Dyck language by $\DA$ and
  the previously defined transduction by $R:= R_{n,C}$. 
  The observation alphabet $\Gamma = B_n \cup C \cup
  \{ \Box \}$ extends~$A$ with an additional neutral
  symbol~$\Box$; let $\DGamma \supseteq \DA$ be the Dyck language over
  this extended alphabet. 
  Consider now the game $G$ over~$\Gamma$ with seed $(R, \Box^*)$.  
 We using the reflection relation $\tau := R
 R^{-1}$ to argue that $\DA = \tau^* \Box^*$.

  To see that the Dyck language $\DA$ 
  is contained in the language $\tau^*\Box^*$ covered
  by $G$, observe that for every pair of words $u, u' \in \Gamma^*$ where
  $u \in \DGamma$ and $u'$ is obtained from $u$ by replacing one 
  innermost pair of matching brackets with $\Box$, we
  have $(u, u') \in R$. Since the relation 
  $R$ contains the identity on $\Gamma$, it follows that $(u, u')
  \subseteq R R^{-1}$, so $(u,
  u') \in \tau$. If we set out with an arbitrary word~$w \in \Lambda^*$,
 first  erase all neutral symbols 
by applying $R$ once, 
and then repeat applying~$R$ to erase an innermost pair of brackets, 
we end up with $\Box^*$, hence
  $w \in \tau^* \Box^*$.
 
  Conversely, to verify that every word in $\tau^* \Box^*$ has
  well-balanced brackets, we show that $\DGamma$ is invariant under
  the transductions $R$ and $R^{-1}$ in the sense that for any pair $(u, w) \in
  R \cup R^{-1}$, we have $u \in \DGamma$ if, and only if, $w \in \DGamma$. 
  Towards this, let us fix an accepting run of $R$ on $u|w$ and
  compare the values $\mathrm{excess}^k( u' )$ and $\mathrm{excess}^k(
  w' )$ of its prefixes $u'|w'$, for any~$k \ge n$: the values are
  equal until a transition
  $q_0 \xrightarrow{[_k|\Box} q_k$ is taken at some prefix
  $u'[_k|w'\Box$. Since $q_k$ is
      not final, the run will take the transition $q_k \xrightarrow{]_k|\Box}
q_0$ at some later position; let $u'']_k|w''\Box$ be the shortest continuation 
of $u'|w'$ at which $q_0$ is reached again.
Hence, we set out with
$\mathrm{excess}^k( w' )= \mathrm{excess}^k(u')$, 
and also have $\mathrm{excess}^k( u' ) = \mathrm{excess}^k( u'')$, since none of $[_k$ or
$]_k$ is transduced while looping in $q_k$; after returning to $q_0$, again
$\mathrm{excess}^k( w'')= \mathrm{excess}^k( u'')$, because the
opening bracket was matched. So, it is the case that 
$\mathrm{excess}^k( u' ) \ge 0$ for all prefixes~$u'$ of~$u$ 
if, and only if, $\mathrm{excess}^k( w') \ge 0$ for all prefixes~$w'$ 
of~$w$, which means that
$R \DGamma \subseteq \DGamma$ and $R^{-1} \DGamma \subseteq \DGamma$; 
the converse inclusions hold because $R$
contains the identity on $\Gamma$.  
Accordingly, for every sequence $w_0, \dots, w_\ell$ of words
  with $w_0 = w$ such that $(w_i, w_{i+1}) \in
  \tau$ for each $i < \ell$, we have $w_i
  \in \DGamma$ if, and only if, $w_{i+1} \in \DGamma$. 
  Since $\Box^* \in \DGamma$, it follows that $w \in \DGamma$, 
  for any word $w \in \tau^* \Box^*$.
  In conclusion,
  $(\cap \Lambda^*)\tau^* \Box^* \subseteq \DA$.
\end{proof}

\subsection{Context-free languages}
To extend the game description of Dyck languages to
arbitrary context-free languages, we use 
the Chomsky-Sch\"utzenberger representation
theorem~\cite{ChomskySch63} in the non-erasing 
variant of proved by Okhotin~\cite{Okhotin12}. 
A letter-to-letter homomorphism $h: \Lambda^* \to \Sigma^*$ is a
functional synchronous transduction that preserves concatenation, that
is, $h(uw) = h(u) h(w)$ for all words $u, w \in \Lambda^*$. 
Such a homomorphism is identified by its restriction $f: \Lambda \to \Sigma$
to single letters. 


\begin{theorem}[\cite{Okhotin12}]\label{thm:Okhotin}
A language $L \subseteq \Sigma^*$ is context-free if, and only if,
there exists a Dyck language $\DA$ over an alphabet $\Lambda$ of brackets
and neutral symbols, a regular language $M \subseteq \Lambda^*$, and a
letter-to-letter homomorphism $h: \Lambda^* \to \Sigma^*$, such that $L =
h(\, \DA \cap M \,)$. 
\end{theorem}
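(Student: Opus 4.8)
The plan is to prove the two implications separately, using closure properties for the direction from right to left and a derivation-encoding argument for the direction from left to right.

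For the implication from right to left, I would argue purely by closure properties of the context-free languages. The Dyck language $\DA$ is context-free, being generated by a grammar with productions $S \to SS$, $S \to [_{k}\,S\,]_{k}$ for each kind of bracket~$k$, $S \to c\,S$ for each neutral symbol~$c$, and $S \to \varepsilon$. Since context-free languages are closed under intersection with regular languages, $\DA \cap M$ is context-free; and since they are closed under homomorphic images, $L = h(\DA \cap M)$ is context-free as well. This direction uses nothing specific about the letter-to-letter character of~$h$.

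For the converse, I would start from a context-free grammar for~$L$ put into a convenient normal form --- a real-time, Greibach-style form in which a leftmost derivation of a word reads exactly one terminal per step --- and read off the alphabet~$\Lambda$, the homomorphism~$h$, and the regular language~$M$ from its productions. The guiding idea is to encode each derivation (equivalently, each accepting run of the associated pushdown automaton) as a word over~$\Lambda$ whose bracket structure mirrors the nesting of subderivations: pushing information onto the stack opens a bracket and the matching completion closes it, so that the well-balanced encodings are exactly the members of the Dyck language, up to local grammar constraints. The homomorphism~$h$ then recovers from each symbol of the encoding the single terminal emitted at that step, while the regular language~$M$ enforces that consecutive symbols are compatible with the grammar's productions and with the control of the automaton --- a bounded, hence regular, condition. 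Correctness, namely that $w \in L$ if and only if some encoding in $\DA \cap M$ projects to~$w$ under~$h$, would then follow by induction on the structure of derivations.

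The main obstacle is making $h$ letter-to-letter, which is the entire point of the non-erasing refinement. In the classical Chomsky--Sch\"utzenberger construction the brackets carrying only structural information are erased by~$h$, so that the encoding is longer than the output word and $h$ is not length-preserving; here every symbol of~$\Lambda$ must project to exactly one terminal of~$\Sigma$, so the encoding must have the same length as~$w$ and afford exactly one bracket or neutral symbol per terminal. The device that makes this possible is the reservoir of neutral symbols together with a careful choice of normal form: terminal-emitting steps that push or pop the stack are encoded by opening or closing brackets that~$h$ maps to the emitted terminal, whereas steps that do not alter the bracket balance are routed through neutral symbols, which leave the Dyck condition untouched and are likewise mapped onto terminals. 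Arranging the grammar so that each terminal corresponds to precisely one structural symbol, with no step left to be erased, is the delicate part of the argument and the technical heart of Okhotin's theorem; once it is secured, the equivalence $L = h(\DA \cap M)$ is routine.
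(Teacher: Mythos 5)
This theorem is not proved in the paper at all: it is imported verbatim from Okhotin~\cite{Okhotin12} (the non-erasing refinement of the Chomsky--Sch\"utzenberger theorem~\cite{ChomskySch63}) and used as a black box, so there is no internal proof to compare your attempt against; it can only be judged on its own merits. Your right-to-left direction is complete and correct: $\DA$ is context-free, and context-free languages are closed under intersection with regular languages and under homomorphisms, so $h(\DA \cap M)$ is context-free.

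The left-to-right direction, however, has a genuine gap: you describe the intended encoding and then explicitly defer what you yourself call ``the technical heart,'' which is precisely the part that needs to be proved. Two concrete pieces are missing. First, the normal form must actually be pinned down: you need a Greibach-type form with at most two nonterminals on every right-hand side, so that each derivation step changes the height of the stack of pending nonterminals by exactly $+1$, $0$ or $-1$ and can therefore be encoded by a single opening bracket, neutral symbol, or closing bracket; a rule such as $A \to aBCD$ changes the height by $+2$ and admits no single-symbol encoding, so without this normalization the ``one symbol per terminal'' scheme does not even get off the ground. Second, and more seriously, you assign to $M$ only the task of checking that \emph{consecutive} symbols are compatible, but adjacency checks cannot enforce the one constraint that matters: when a subderivation finishes, the next step must expand the nonterminal that was pushed (and left pending) unboundedly many positions earlier. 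That long-distance consistency must be delegated to the Dyck matching itself, by making the bracket \emph{kinds} carry the productions or nonterminals, so that a closing bracket necessarily agrees with its matching opening bracket and the local language $M$ can then compare the symbol following the closing bracket against that information. Proving that these two mechanisms together force every word of $\DA \cap M$ to decode to a valid leftmost derivation --- i.e.\ the inclusion $h(\DA \cap M) \subseteq L$ --- is the actual content of the theorem; as written, your proposal asserts this (``once it is secured, the equivalence \ldots is routine'') rather than proving it, so the hard half of the equivalence, the very half that distinguishes the non-erasing refinement from the classical theorem, remains open.
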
 

We will show how a game that covers an arbitrary 
language~$L \subseteq \Lambda^*$ can be extended to cover
an homomorphic image of the intersection of~$L$ with a regular
language.
Let $h: \Lambda \to \Sigma$ be a letter-to-letter homomorphism, and
let $R$ be a synchronous transducer over an alphabet~$\Gamma \supseteq
\Lambda$.
We construct from~$R$ a new transducer $R_h$ 
over the enlarged alphabet $\Sigma \cup \Gamma \times \Lambda$ by
adding a \emph{coding} cycle. This is done by including a   
fresh final state~$q_h$, as well as transitions $q_0 \xrightarrow{h(a)|(a, a)} q_h$
and $q_h \xrightarrow{h(a)|(a, a)} q_h$ for all $a \in \Lambda$, and then
relabelling each transition $p \xrightarrow{a|b} q$ of $R$ to 
$p \xrightarrow{(a,x)|(b,x)} q$, for all $x \in \Lambda$.
Intuitively, the new transducer
duplicates the automaton tapes into two tracks which are both
initialised with a  
homomorphic pre-image $u \in \Lambda^*$ of a terminal word $w \in
\Sigma^*$, in a transduction via the coding cycle. The first track is intended to simulate~$R$ on the
pre-image~$u$, whereas the second track stores~$u$: 
the contents is looped through every other transduction
of $R_h$ or of its inverse. Notice that every run of~$R_h$
proceeds either through the new coding cycle, or through the original transducer
$R$, in the sense that, for any pair $(w, w')
\in R_h$ we have $\{ w, w' \} \subseteq \Sigma^* \cup
(\Gamma \times \Lambda)^*$.

\begin{lemma}\label{lem:hom-reg}
Suppose that a game acceptor with seed $(R, L_\acc)$
covers a language $L \subseteq \Lambda^*$. Let 
$M \subseteq \Lambda^*$ be a regular language and let
$h: \Lambda^* \to \Sigma^*$ be a letter-to-letter
homomorphism. 
Then, the game acceptor with seed $(R_h, L_\acc \times M)$ covers the
language $h( L \cap M )$ over the terminal alphabet~$\Sigma$.
\end{lemma}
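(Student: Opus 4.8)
The plan is to read off the covered language from Theorem~\ref{thm:char-iterated} and to compute the reflection of the enlarged seed explicitly. Since we work in the regime $L_\rej = \emptyset$, the game with seed $(R_h, L_\acc \times M)$ is solvable by Theorem~\ref{thm:char-iterated}(i), and by part~(ii) it covers over~$\Sigma$ the language $(\cap \Sigma^*)\, \tau_h^*\,(L_\acc \times M)$, where $\tau_h := R_h R_h^{-1}$. On the other hand, the hypothesis that $(R, L_\acc)$ covers~$L \subseteq \Lambda^*$ gives, via the same theorem, $L = (\cap \Lambda^*)\, \tau^* L_\acc$ with $\tau := R R^{-1}$. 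It therefore suffices to prove the identity $\tau_h^*(L_\acc \times M) \cap \Sigma^* = h(L \cap M)$.

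First I would describe $\tau_h$ in terms of $\tau$. Writing $\langle v, x\rangle$ for the word over $\Gamma \times \Lambda$ whose first track is $v$ and whose second (``storage'') track is $x$, the coding cycle contributes exactly the pairs $(h(u), \langle u, u\rangle) \in R_h$, while the relabelled transitions contribute the pairs $(\langle v, x\rangle, \langle v', x\rangle) \in R_h$ with $(v, v') \in R$; since every run of $R_h$ is purely coding or purely relabelled, these are all the pairs. Composing with $R_h^{-1}$ yields two regimes for $\tau_h$: inside $(\Gamma \times \Lambda)^*$ it keeps the storage track fixed and acts as $\tau$ on the first track, that is $\langle v, x\rangle \mathrel{\tau_h} \langle v', x\rangle$ iff $(v, v') \in \tau$; across the two alphabets it links $h(u)$ to every $\langle v, u\rangle$ with $(v, u) \in R$, through the shared player-$2$ observation $\langle u, u\rangle$, and it is the identity on $h(\Lambda^*) \subseteq \Sigma^*$. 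In particular, as $\Sigma^+$ and $(\Gamma \times \Lambda)^+$ are disjoint, any $\tau_h$-chain from a word of $\Sigma^*$ to $L_\acc \times M$ must cross through the coding cycle, after which the storage track stays constant.

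With this picture, both inclusions follow by transporting chains between the two games. For $h(L \cap M) \subseteq \tau_h^*(L_\acc \times M) \cap \Sigma^*$, take $u \in L \cap M$; entering the storage regime at $\langle u, u\rangle$ and using $L = \tau^* L_\acc \cap \Lambda^*$ to drive the first track by $\tau$ from $u$ to some $v_a \in L_\acc$ while the storage track remains at $u \in M$, we reach $\langle v_a, u\rangle \in L_\acc \times M$, so $h(u)$ lies in the covered language. Conversely, given $w$ in the covered language, the constant storage track along a witnessing chain is a preimage $u \in \Lambda^*$ of $w$ with $u \in M$, while the first track, driven by $\tau$, reaches $L_\acc$ from the entry word $v$; hence $v \in \tau^* L_\acc$, and it remains to transfer this to~$u$ so that $u \in L \cap M$ and $w = h(u) \in h(L \cap M)$.

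That last transfer is the step I expect to be the main obstacle, and it is where the structure of the coding cycle is essential. Entering the storage regime links $h(u)$ not to $\langle u, u\rangle$ but to an arbitrary $\langle v, u\rangle$ with $(v, u) \in R$, so the first track need not start at the stored word~$u$; to conclude $u \in \tau^* L_\acc$ from $v \in \tau^* L_\acc$ I must know that $R$-related words share membership in $\tau^* L_\acc$, equivalently that $R \subseteq \tau = R R^{-1}$. This holds precisely when $R$ contains the diagonal, which is exactly what the copying transitions of the transducer $R_{n,C}$ provide, and what lets the coding cycle initialise the simulation track to the stored preimage and anchor the forward direction at $\langle u,u\rangle$. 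I would therefore record the reflexivity of the seed as the property that carries both directions through.
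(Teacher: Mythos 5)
Your proposal is correct in substance and follows the paper's own proof in all essentials: the reduction through Theorem~\ref{thm:char-iterated} to the identity $(\cap\Sigma^*)\,\tau_h^*(L_\acc\times M)=h(L\cap M)$, the two-regime computation of $\tau_h$ (identity on $h(\Lambda^*)$, the relation $\tau$ acting on the first track with frozen storage track inside $(\Gamma\times\Lambda)^*$, coding links across the two alphabets), and the transport of witnessing chains in both directions are exactly the paper's argument. Your backward direction is in fact more careful than the paper's at one point: the paper asserts that the last crossing out of $\Sigma^*$ lands at a word of the form $(u,u)$, whereas, as you observe, it lands at some $\langle v,u\rangle$ with $(v,u)\in R$, and one then needs $R\subseteq RR^{-1}$ to carry membership in $\tau^*L_\acc$ from $v$ over to the stored word $u$.

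The one respect in which the proposal falls short of the statement is the handling of that reflexivity. The lemma is claimed for an \emph{arbitrary} seed $(R,L_\acc)$ covering $L$, and you propose to ``record the reflexivity of the seed'' as an additional hypothesis, observing that it holds for $R_{n,C}$. The paper instead turns reflexivity into a genuine without-loss-of-generality step: if $R$ does not contain the identity on $\Lambda$, replace it by the reflexive transduction $RR^{-1}$, which is again synchronous and seeds a game covering the same language $L$ --- its reflection is $\tau\tau$, and $(\tau\tau)^*=\tau^*$ because $\tau$ is reflexive on its domain, every word of $\Lambda^+$ being the observation of some play. Adding this one-line reduction converts your argument into a proof of the lemma as stated; without it, you have proved only the special case of reflexive seeds, which suffices for the application in Theorem~\ref{thm:char-cf} (where $R_{n,C}$ contains the identity by its copying transitions) but is strictly weaker than the claim.
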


\begin{proof}
Let~$G$ be a game over an alphabet~$\Gamma \supseteq \Lambda$ with
seed $(R, L_\acc)$.
Without loss of generality, we assume that~$R$
contains the identity on $\Lambda$, otherwise we take the reflexive 
transduction $RR^{-1}$ to obtain the seed of a game that covers the
same language. Further, let 
$M \subseteq \Lambda^*$~be a regular language  and 
let $h: \Lambda \to \Sigma$ represent a letter-to-letter homomorphism as in the statement.
We argue for the case where the alphabets 
$\Sigma$ and $\Gamma$ are disjoint; the general case follows by
composition with a relabelling homomorphism. 
Now, consider the game~$G'$ with seed transducer $R_h$ and
accepting language $L_\acc' := L_\acc \times M$. 
We denote the reflection relations associated $R$ and $R_h$
by $\tau:=RR^{-1}$ and $\tau' := R_h R_h^{-1}$.

To see that $h( L \cap M)$ is included in the 
language 
covered by~$G'$,
consider a word $w = h( u )$ for some $u \in \tau^* L_\acc \cap M$. 
By Theorem~\ref{thm:char-iterated}, 
there exists a witnessing sequence $(u_i)_{i \le \ell}$ with $u_0
= u$, $u_\ell \in L_\acc$, and $(u_i, u_{i+1}) \in \tau$ for
all~$i < \ell$. By construction of~$R_h$, we have $( w, (u, u)) \in
R_h$ which implies $( w, (u, u)) \in \tau'$,
thanks to our assumption that $(\cap \Lambda^*) \subseteq R$.
Since $(u_\ell, u) \in L_\acc \times M$, 
the sequence starting with $w$ 
and followed by $((u_i,u))_{i < \ell}$ is witnessing that $w \in \tau'^* L_\acc'$.

Conversely, consider a word $w \in (\cap \Sigma^*)\tau'^*L_\acc'$ 
and let $(w_i)_{i \le \ell}$
be a witnessing sequence with $w_0 = w$, $w_\ell \in L_\acc'$, and
$(w_i, w_{i+1}) \in \tau'$ for all $i \le \ell$. 
By construction of~$R_h$, the initial word $w$ 
is preserved at each term $w_i$ of the sequence, in the sense that
either $w_i = w$, or $w_i = (u_i, x_i ) \in \Gamma^* \times \Lambda^*$
for some $x_i \in \Lambda^*$
such that $w = h( x_i )$. 
By our assumption
that $\Sigma$ and $\Gamma$ are disjoint, we have $w \not \in
L_\acc$, so there exists a last position~$k < \ell$ 
with $w_k = w$. 
As $w \in \Sigma^*$ can only be transduced via the coding cycle,
it follows that $w_{i+1} = (u, u)$ for some $u \in \Lambda^*$ with
$h( u ) = w$. 
For each following position $i > k$, the terms of the sequence are of the form
$w_{i} = (u_i, u)$ for a certain word $u_i \in \Gamma^*$. Hence the coding
cycle cannot be applied and 
the sequence $(u_i)_{k < i \le \ell}$ satisfies,
$(u_i, u_{i+1}) \in \tau$ for all $i < \ell$. Moreover, 
$w_\ell = (u_\ell, u) \in L_\acc' = L_\acc \times M$. Thus, the
sequence witnesses that $u = u_{k+1} \in \tau^*L_\acc \cap M$, and since $h( u ) = w$, 
it follows that $w \in h( L \cap M)$. 
\end{proof}


Now, we can construct a game acceptor for covering an
arbitrary context-free language $L \subseteq \Sigma^*$
represented as $L = h( \DA \cap M)$ according to 
Theorem~\ref{thm:Okhotin}, by
applying Lemma~\ref{lem:hom-reg} to the particular case
of Dyck languages: 
we set out with the seed transducer~$R_{n,C}$ for the Dyck
language~$\DA$ over the
alphabet $\Lambda = B_n \cup C$
and add a coding cycle for the homomorphism $h$. This yields a
transducer~$R_h$ over the alphabet $\Sigma \cup (\Lambda \cup
\{\Box\}) \times \Lambda$ such that the game with seed $(R_h, \Box^* \times M)$
covers~$L$.

The generic construction of~$R_h$ can be
simplified in the case where $R = R_{n,C}$ is the seed transducer
of a Dyck language. Notice that, if we start from a word $w \in \Sigma^*$, 
then every distinct word $w' \neq w$ reached in the iteration
$(w, w') \in (R_h R_h^{-1})^*$ 
consists only of letters of the form
$(x, x)$ or $(\Box, x)$ with $x \in B_n \cup C$.
Hence, the \emph{reduced} transducer~$\hat{R}_h$ obtained from~$R_h$, by restricting to the 
(subset of transitions labelled with letters in the)
subalphabet $\Sigma \cup \{ (x, x)~|~x\in \Lambda \} \cup \{(\Box, x)~|~ x
\in \Lambda \} $
and identifying each pair $(x, x) \in A \times A$ with $x$, is equivalent to
$R_h$ in the sense that, for every regular language $M \subseteq
\Lambda^*$, the game with seed $(\hat{R}_h, \Box^* \times M)$
covers
the same language over~$\Sigma$ as the one with seed 
$(R_h, \Box^* \times M)$.
In contrast to~$R_h$, however, the reduced transducer~$\hat{R}_h$ has
fewer transition and a smaller alphabet, which extends the one of 
the underlying Dyck language~$\DA$ only with a 
\emph{neutralised} copy of each letter in $\Lambda$.

\newcommand*{\BoxedUp}[1]{\setlength{\fboxsep}{2.2pt}\Boxed{#1}}
\setlength{\fboxsep}{1.1pt}

We argue that the shape of the seed constructed above is prototypical for games that cover
context-free languages.
Therefore, we focus on games with a seed isomorphic 
to the seed $(\hat{R}_h, \Box^* \times M)$ obtained for the
homomorphic image of a Dyck-language over~$n$ bracket pairs
intersected with a regular language.
An \emph{$n$-flower} transducer is
a transducer~$R = (Q, \Gamma, \Delta, q_0, F)$ on a set of states
$Q = \{q_0, q_1, \dots, q_n, q_h \}$ with initial state~$q_0$
and final state set $F = \{q_0, q_h\}$, over an
alphabet that can be partitioned into $\Gamma = \Sigma \cup B_n \cup C
\cup \Lambda'$ where $B_n$ is a set of~$n$ matching
brackets~$[_k$, $]_k$ and $\Lambda'$ is a disjoint copy of $A: = B_n \cup C$
associating a neutralised variant $\BoxedUp{a}$ to each letter $a \in A$, 
such
that $\Delta$ contains 
copying transitions 
$q_0 \xrightarrow{a|a} q_0$ for all $a \in \Lambda \cup \Lambda'$, 
and 
{\setlength{\fboxsep}{2pt}$q_k \xrightarrow{\BoxedUp{a}\,|\,\BoxedUp{a}} q_k$}
 for all $a \in A$ and each $k \in \{1, \dots n \}$, as well as
erasing transitions 
$q_0 \xrightarrow{c|\,\BoxedUp{c}}q_0$ for each $c \in C$, and
$q_0 \xrightarrow{[_k|\, \Boxed{[_k}} q_k$, 
$q_k \xrightarrow{]_k|\, \Boxed{ ]_k}} q_0$ for each $k$. 
Furthermore, we require that there is a homomorphism~$h: \Lambda \to \Sigma$,
such that the remaining transitions of $\Delta$ are coding transitions
$q_0 \xrightarrow{h(a)|a} q_h$
and $q_h \xrightarrow{h(a)|a} q_h$ for all $a \in \Lambda$.
Finally, a seed~$(R, L_\acc)$ is an $n$-flower if
$R$ is an $n$-flower transducer and $L_\acc$ is a regular language
over the alphabet $A'$ of its neutralised symbols.
An example of a $2$-flower transducer is depicted in Figure~\ref{fig:flower}. 

\begin{figure}[t]
\tikzset{every picture/.style={->,>=stealth',shorten >=1pt,auto, font=\scriptsize}} 
\tikzset{initial text=, initial distance=1em} 
\tikzset{accepting by double} 
\tikzstyle{every state}=[minimum size=1em, inner sep=1.8pt, font=\scriptsize]

\begin{center}
   
  \setlength{\fboxsep}{1.1pt}
  \subfloat[$2$-flower]{
  \label{fig:flower}
  
    \begin{tikzpicture}[xscale=1,yscale=.8]
      
      \node[initial below, accepting, state]          (Q0) at (0, 0)           {$q_0$};  
      \node[state]          (Q1) at (-1.5, 1.7)          {$q_1$};
      \node[state]          (Q2) at (1.5, 1.7)           {$q_2$};
      \node[accepting, state]          (QH) at (1, - 1.5)          {$q_h$};
      
      \path (Q0) 
      edge[bend left] node[pos=0.6, outer sep=-3pt]
      {$\dfrac{(}{\noL}$}(Q1)
      edge[bend left] node[pos=0.7, outer sep=-2pt]
      {$\dfrac{[}{\noLL}$}(Q2)
      edge[bend left] (Q2)
      edge (QH)
      edge[loop, out=170, in=210, looseness=6] 
      node[below, , yshift=-3pt, outer sep=0pt]{$\dfrac{x}{x}, \dfrac{\BoxedUp{x}}{\BoxedUp{x}}
        {~\scriptstyle x \in A}$} (Q0);
      
      \path (Q1) edge[bend left ] node[pos=0.3, outer sep=-2pt]
      {$\dfrac{)}{\noR}$} (Q0)
      edge[loop above] node[above]{$\dfrac{\BoxedUp{x}}{\BoxedUp{x}}{~\scriptstyle x \in
          A}$} (Q1);
      
      \path (Q2) edge[bend left ] node[pos=0.4, outer sep=-2pt]
      {$\dfrac{]}{\noRR}$} (Q0)
      edge[loop above] node{$\dfrac{\BoxedUp{x}}{\BoxedUp{x}}{~\scriptstyle x \in
          A }$}  (Q2);
      
      \path (QH) edge[loop right]
      node{
        $\dfrac{a}{[}$, $\dfrac{b}{(}$,
        $\dfrac{c}{]}$, $\dfrac{c}{)}$
      }(QH);

    \end{tikzpicture}      
  }
\subfloat[loose flower for palindromes]{
  \label{fig:loose-flower}
 
  \begin{tikzpicture}[xscale=1,yscale=.7]
      
    \node[initial below, state]          (Q0) at (0, 0)           {$q_0$};  
    \node[state]          (Q1) at (-1, 2)          {$q_1$};
    \node[state]          (Q2) at (2, 2)           {$q_2$};
    \node[accepting, state]          (QH) at (.8, - 2)
    {$q_h$};
    \node[accepting, state]           (QF) at (1, 0) {$q_f$};
    \node[accepting, state]           (QC) at (-.8, -2) {$q_c$};
    
    \path (Q0) 
    edge[bend left] node[outer sep=-2pt, pos=0.7] 
    {$\dfrac{(}{\noL}$} (Q1)
    edge[bend left] node[outer sep=-2pt, pos=0.7] {$\dfrac{[}{\noLL}$}(Q2)
    edge (QC)
    edge (QH)
    edge[loop, out=160, in=210, looseness=5] node[below,yshift=-1pt,xshift=-10pt,outer sep=0pt]{$ \dfrac{\noL}{\noL}, \dfrac{\noLL}{\noLL}$} (Q0);
    
    \path (Q1) edge[bend left, outer sep=-2pt, pos=0.3] node
    {$\dfrac{)}{\noR}$} (QF)
    edge[loop above] node[above, outer sep=-3pt]
    {$\dfrac{x}{x} x \in A $} (Q1);
    
    \path (Q2) edge[bend left, outer sep=-2pt, pos=0.3]
    node 
    {$\dfrac{]}{\noRR}$} (QF)
    edge[loop above] node [above, outer sep=-3pt]
    {$\dfrac{x}{x}~x \in A $} (Q2);
    
    \path (QH) edge[loop right]
    node{
      $\dfrac{a}{[}$, $\dfrac{b}{(}$,
      $\dfrac{a}{]}$, $\dfrac{b}{)}$
    }(QH);
    
    \path (QC)
    edge[loop left] node {$\dfrac{\BoxedUp{x}}{\BoxedUp{x}},\dfrac{x}{x}{~\scriptstyle x \in A }$} (QC);
    
    \path (QF)
    edge[loop, out=20, in=330, looseness=5] node[below, yshift=1pt,xshift=10pt]{$ \dfrac{\noR}{\noR}, \dfrac{\noRR}{\noRR}$} (QF);
    
  \end{tikzpicture}      
}
\end{center}

\caption{Flower transducers}\label{fig:flowers}
\end{figure}
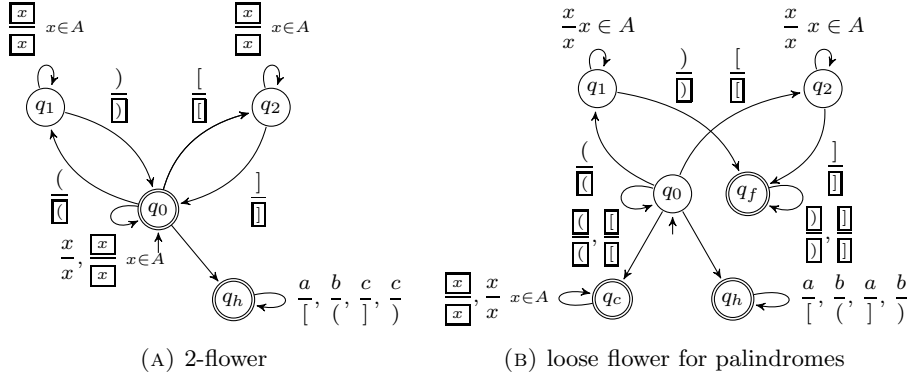

\begin{theorem}\label{thm:char-cf}
A language is context-free if, and only if,
it is covered by a consensus game acceptor with an $n$-flower seed. 
\end{theorem}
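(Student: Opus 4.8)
The forward implication is already implicit in the construction preceding the statement, so I would only recap it. Given a context-free language $L \subseteq \Sigma^*$, the non-erasing Chomsky--Sch\"utzenberger theorem (Theorem~\ref{thm:Okhotin}) provides a Dyck language $\DA$ over $A = B_n \cup C$, a regular language $M \subseteq A^*$, and a letter-to-letter homomorphism $h : A \to \Sigma$ with $L = h(\DA \cap M)$. By Lemma~\ref{lem:Dyck} the seed $(R_{n,C}, \Box^*)$ covers $\DA$, and by Lemma~\ref{lem:hom-reg} the seed obtained by adjoining the coding cycle for $h$ covers $h(\DA \cap M) = L$; passing to the reduced transducer $\hat{R}_h$ turns this into a seed of exactly the $n$-flower shape that covers the same language. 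Hence every context-free language is covered by an $n$-flower seed.

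For the converse, fix an $n$-flower seed $(R, L_\acc)$ with underlying homomorphism $h : A \to \Sigma$, brackets $B_n$, neutral symbols $C$, and neutralised copy $A'$, and write $\tau := R R^{-1}$. Since $L_\rej = \emptyset$, Theorem~\ref{thm:char-iterated} tells us the game is solvable and covers $(\cap \Sigma^*)\,\tau^* L_\acc$. Let $\nu : A \to A'$, $a \mapsto \Boxed{a}$, be the neutralisation bijection, let $\pi : A \cup A' \to A$ be its left inverse with $\pi(a) = \pi(\Boxed{a}) = a$, and put $M := \nu^{-1}(L_\acc)$, a regular language over $A$. My plan is to establish the identity $(\cap \Sigma^*)\,\tau^* L_\acc = h(\DA \cap M)$; the right-hand side is context-free by Theorem~\ref{thm:Okhotin}, which settles the claim.

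The crux is a case analysis of single $\tau$-steps forced by the flower shape. Because $\Sigma$ is disjoint from $A \cup A'$ and $\Sigma$-letters occur only on the input tape of coding transitions, I would first show that the only $\tau$-neighbours of a word $w \in \Sigma^+$ are $w$ itself and the pre-images $u \in A^*$ with $h(u) = w$; in particular, coding is the unique bridge out of $\Sigma^*$ and it lands exactly on the fully de-neutralised words in $A^*$. Dually, every $\tau$-step between words over $A \cup A'$ is a reduction step that neutralises or de-neutralises an innermost bracket pair or a neutral symbol, and each reduction transition matches its two tapes under $\pi$; thus reduction steps keep $\pi$ constant and, by the excess computation of Lemma~\ref{lem:Dyck} carried out with the several fillers $\Boxed{a}$ in place of the single $\Box$, they preserve membership in the Dyck language $D'$ over $B_n$ with neutral symbols $C \cup A'$. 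For the inclusion ``$\supseteq$'', given $u \in \DA \cap M$ I exhibit the path $h(u) \xrightarrow{\tau} u \xrightarrow{\tau} \dots \xrightarrow{\tau} \nu(u)$ that decodes via the coding cycle and then neutralises innermost pairs one by one, ending at $\nu(u) \in L_\acc$. For ``$\subseteq$'', from $w \in (\cap \Sigma^*)\,\tau^* L_\acc$ with a witnessing path ending in some $v \in L_\acc \subseteq (A')^*$, the neighbour analysis forces the reduction segment containing $v$ to be entered through a coding step at a word $u^* \in A^*$ with $h(u^*) = w$; since $\pi$ is constant on that segment and $v$ is fully neutralised, $v = \nu(u^*)$, whence $u^* \in M$, while the Dyck invariant applied to the all-neutral word $v \in D'$ yields $u^* \in D' \cap A^* = \DA$. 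Thus $w = h(u^*) \in h(\DA \cap M)$.

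The main obstacle I anticipate is precisely this bookkeeping for $\tau = R R^{-1}$: proving that coding is the sole connection between $\Sigma^*$ and the reduction part and that it reaches only the fully de-neutralised words, and transporting the excess argument of Lemma~\ref{lem:Dyck} to the refined alphabet so that the terminal neutral word $v$ is pinned to $\nu(u^*)$ and simultaneously certifies $u^* \in \DA$. A witnessing path may oscillate between $w$ and several pre-images, but this is harmless: a single de-neutralising segment reaching $v$ already fixes one good pre-image $u^*$, which is all the argument needs.
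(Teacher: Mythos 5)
Your forward direction is exactly the paper's construction and is fine. The converse, however, has a genuine gap at its crux: the neighbour analysis is false. In an $n$-flower transducer the coding cycle is entered from $q_0$, the same state that carries the copying and erasing loops, so an accepting run may copy or erase a prefix and only then switch to coding. Concretely, for $a,b \in \Lambda$ the run $q_0 \xrightarrow{a|a} q_0 \xrightarrow{h(b)|b} q_h$ is accepting, so $(a\,h(b),\, ab) \in R$; pairing it with the pure coding run on $h(a)h(b)$, which also outputs $ab$, gives $(h(a)h(b),\, a\,h(b)) \in RR^{-1} = \tau$. Thus the partially decoded word $a\,h(b)$ --- neither a word of $\Sigma^*$ nor a pre-image in $\Lambda^*$ --- is a $\tau$-neighbour of $h(a)h(b)$, contradicting your claim that the only neighbours of $w \in \Sigma^+$ are $w$ itself and its full pre-images. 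Your inclusion $(\cap\Sigma^*)\tau^*L_\acc \subseteq h(\DA \cap M)$ inherits the damage: a witnessing path need not enter $(\Lambda \cup \Lambda')^*$ ``through a coding step at a word $u^*$ with $h(u^*)=w$''; it can enter through a partial-decoding step from a mixed word whose decoded prefix may moreover already be partially neutralised, as in $h(c)h(b) \to c\,h(b) \to \Boxed{c}\,h(b) \to \Boxed{c}\,b$ for a neutral symbol $c$. To repair this you need invariants that survive mixed words --- for instance, that the letter-to-letter map fixing $\Sigma$ and sending each $x \in \Lambda \cup \Lambda'$ to $h(\pi(x))$ is constant along $\tau$-paths, combined with a Dyck-type invariant on decoded prefixes. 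That is genuine extra work, not bookkeeping; your closing remark about ``oscillation'' addresses only alternation between $w$ and full pre-images, not these hybrid words.

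Second, even once repaired, your identity proves less than the theorem. You compute only the part of the covered language inside $\Sigma^*$, whereas covering is relative to a terminal alphabet, and the paper's proof explicitly determines the language covered over the full observation alphabet: it is $h(\DA \cap M) \cup (\DA' \cap M)$, where $\DA'$ is the Dyck language treating neutralised letters as neutral; context-freeness over any terminal alphabet then follows by intersection with a regular set. This non-terminal part is not a formality. The remark following the theorem exhibits a flower-shaped transducer whose seed language $L_\acc = [^+\,\#\,\Boxed{[}^+\,]^+$ violates only the requirement that $L_\acc$ be over neutralised symbols, and the resulting covered language has the non-context-free intersection $\Boxed{[}^n\,\#\,[^n\,]^n$ --- so the flower restriction on $L_\acc$ does its real work precisely on observation sequences outside $\Sigma^*$, which your plan never considers. (In fairness, the paper's own assertion that every observation sequence lies in $\Sigma^* \cup (\Gamma \setminus \Sigma)^*$ is also defeated by the mixed runs above, so a fully rigorous converse needs the invariant route in either case; but the paper does analyse the non-terminal observation sequences, and your proposal omits them entirely.)
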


\begin{proof}
Let $L \subseteq \Sigma^*$ be an arbitrary context-free language. 
According to the Representation Theorem~\ref{thm:Okhotin}, there exists 
an alphabet $\Lambda$ partitioned into a bracket alphabet~$B_n$ 
and a set of neutral symbols $C$, a letter-to-letter homomorphism given by 
$h : \Lambda \to \Sigma$, and a regular language $M
\subseteq \Lambda^*$, such that $L = h( D_{\Lambda} \cap M )$. 
With these in hand, we construct the seed transducer $R_{n, C}$ for
the Dyck language~$D_{\Lambda}$ as in Lemma~\ref{lem:Dyck} and 
then add a coding cycle~$q_h$ for the homomorphism~$h$.
According to Lemma~\ref{lem:hom-reg}, the transducer~$R$ constructed in this way
together with the seed language $L_\acc := \Box^*\times M$ 
describe a game that covers $L$. 
By reducing~$R$ to the alphabet $\Sigma \cup A \cup \{ \Box \} \times
A$, we finally obtain the~$n$-flower seed 
seed~$(\hat{R}, \Box^*\times M)$ that also covers~$L$ 
over~$\Sigma$.

For the converse statement, let $G$ be a consensus game
with an $n$-flower seed $(R, L_\acc)$ over an alphabet~$\Gamma$. We
wish to prove that the language covered by~$G$ over~$\Gamma$ is
context-free. First, 
we partition~$\Gamma$ into an alphabet~$\Sigma$, a set~$B_n$ of $n$ matching
brackets, a set $C$ of (prime) neutral symbols, and a neutralised copy~$\Lambda'$ of $\Lambda := B_n
\cup C$. Let~$\DA$ be the Dyck language over~$\Lambda$, and let
$h: \Lambda \to \Sigma$ be the letter-to-letter homomorphism
determined by the coding cycle in~$R$. Then, 
consider the \emph{neutralising} letter-to-letter homomorphism
$\nu: A \cup A' \to A'$ which
maps both $a$ and $\BoxedUp{a}$
to the neutralised copy $\BoxedUp{a}$, and set $M := \nu^{-1} L_\acc$; 
as an inverse homomorphic image of a regular language, $M$ is
regular.
By construction of the flower transducer, 
we know that, over the
alphabet~$\Sigma$, the game $G$ covers the context-free language $h( \DA \cap M)$.
 
To describe the language covered over~the
full alphabet~$\Gamma$, consider the Dyck language
$\DA'$ over the alphabet~$A \cup A'$ with the same set~$B_n$ of
brackets as $\DA$, but with an extended set~$C \cup A'$ of neutral
symbols. We observe that $\DA'$ is closed under $R$ and $R^{-1}$ in
the sense that, for any pair of non-terminal words $w, w' \in \Gamma^* \setminus
\Sigma^*$ with $(w, w') \in R$, we have $w \in \DA'$ if, and
only if, $w' \in \DA'$. Moreover, $\nu( w ) = \nu( w' )$
(letters are neutralised, but never forgotten). 
This implies that, over $\Gamma \setminus \Sigma$, the game $G$ covers
the language $\DA' \cap M$. 
Since every observation sequence of~$G$ is either contained in $\Sigma^*$ or in $(\Gamma
\setminus \Sigma)^*$, it follows that, 
over the full alphabet~$\Gamma$, the 
consensus game~$G$ covers the context-free language $h( \DA \cap M ) \cup (\DA'
\cap M )$. 
\end{proof}

\setlength{\fboxsep}{1pt}
Notice that without restricting the alphabet of the accepting seed language to
neutralised symbols, the flower structure of the
transducer alone would not guarantee that the language covered by a
game is context-free. For instance, the one-flower transducer 
over a bracket pair $[,]$ one neutral symbol $\#$, and their neutralised
copies, together with the seed
language $L_\acc := [^+\, \# \, \BoxedUp{[}^+\, ]^+$ give rise to a game where
the covered language~$L$ is not context-free, since the intersection $L \cap \BoxedUp{[}^+\#\, [^+\, ]^+ =  
\BoxedUp{[}^n\#\, [^n\, ]^n~$ is not context-free.

Returning to games, the argument from Lemma~\ref{lem:char-winning} shows that, given a game with seed $(R, L_\acc, L_\rej)$, 
at every play~$\pi$ with observation~$\beta^1( \pi )$  
in the language~$L_1$ covered by $(R, L_\acc)$ over the full
observation alphabet, the only safe decision is~$1$, whereas 
at each play with observations in the language~$L_0$ covered by $(R,
L_\rej)$ the only safe decision is~$0$. 
An observation-based strategy prescribing~$s^1( \pi ) :=1$ 
precisely if $\beta^1(\pi) \in L_1$ induces a joint strategy that is
\emph{optimal} in the sense that it prescribes a safe decision whenever one
exists. Likewise, a strategy that prescribes~$0$ 
precisely at sequences in $L_0$ is optimal. Optimal strategies are
\emph{undominated}, that is, no other strategy wins
strictly more plays. Clearly, if a game is solvable, 
then every optimal strategy is winning.

One consequence of Theorem~\ref{thm:char-cf} is that, for games where
one of $(R, L_\acc)$ or $(R, L_\rej)$ is an
$n$-flower, the set~$L_1$ or~$L_0$ is context-free, and therefore
recognisable by a nondeterministic push-down automata, which we can
construct effectively from the game description. The obtained
automaton hence implements an optimal strategy that is indeed a
winning strategy, in case the game is solvable.
According to Theorem~\ref{thm:char-iterated}, already for games where both $L_0$ and $L_1$ are context-free,
the question of whether a winning strategy exists amounts to solving the 
disjointness problem for context-free languages and is hence
undecidable. Under these circumstances, it is remarkable that we can
effectively construct strategies that are optimal and, moreover, winning
whenever the game is solvable.

\begin{corollary}
For any consensus game~$G$ with seed $(R, L_\acc, L_\rej)$
where either $(R, L_\acc)$ or $(R, L_\rej)$ is an $n$-flower, we can effectively construct a
push-down automaton~$\calS$ that implements an optimal strategy.
\end{corollary}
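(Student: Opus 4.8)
The plan is to derive the corollary directly from Theorem~\ref{thm:char-cf} and the preceding discussion of optimal strategies. First I would reduce to a single case: if $(R, L_\rej)$ is the $n$-flower rather than $(R, L_\acc)$, the whole argument goes through after swapping the two decisions, that is, after interchanging the roles of $L_\acc$ and $L_\rej$ and of the decisions $0$ and $1$. So assume that $(R, L_\acc)$ is an $n$-flower.

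Under this assumption, let $L_1$ be the language covered by the sub-seed $(R, L_\acc)$ over the \emph{full} observation alphabet $\Gamma$. The proof of Theorem~\ref{thm:char-cf} exhibits this language explicitly as the union $h(\DA \cap M) \cup (\DA' \cap M)$ of two context-free languages, and the construction is effective from the flower description of $R$ together with the regular language $L_\acc$ (the homomorphism $h$ and the regular language $M = \nu^{-1} L_\acc$ are read off from the coding cycle and the neutralised alphabet). Since context-free languages are effectively closed under union, I can therefore build a nondeterministic push-down automaton $\calS$ with $L(\calS) = L_1$.

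Next I would use $\calS$ to realise the optimal strategy singled out just before the statement. On the observation sequence $\beta^1(\pi)$ of a play $\pi$, the automaton $\calS$ accepts precisely when $\beta^1(\pi) \in L_1$; declaring $s^1(\pi) := 1$ in exactly this case and $s^1(\pi) := 0$ otherwise yields the observation-based strategy that prescribes $1$ precisely at sequences in $L_1$. By the discussion preceding the corollary, decision $1$ is the only safe decision at every play whose observation lies in $L_1$, so $s^1$ prescribes a safe decision whenever one exists. Hence $s^1$ is optimal (and thus undominated), and it is winning whenever $G$ is solvable.

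The only real content lies in reading the proof of Theorem~\ref{thm:char-cf} as an effective construction that yields a grammar, equivalently a push-down automaton, for $L_1$ over the whole alphabet $\Gamma$ and not merely over the terminal alphabet $\Sigma$. The point I expect to need care with is correctness of the decision both on terminal observation sequences in $\Sigma^*$ and on the non-terminal ones in $(\Gamma \setminus \Sigma)^*$: this is exactly why one works with the language covered over the full alphabet, where the observation sequences split disjointly between $\Sigma^*$ and $(\Gamma \setminus \Sigma)^*$ and the covered language is the union of the corresponding context-free pieces $h(\DA \cap M)$ and $\DA' \cap M$.
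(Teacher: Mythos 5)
Your proposal is correct and takes essentially the same route as the paper: it likewise combines the effective context-freeness (via Theorem~\ref{thm:char-cf}) of the language $L_1$ covered by the flower sub-seed over the full observation alphabet with the preceding observation that the strategy prescribing $1$ precisely on $L_1$ is optimal, hence winning whenever the game is solvable. The only cosmetic difference is that you reduce the $(R, L_\rej)$ case to the $(R, L_\acc)$ case by inverting decisions, whereas the paper handles it symmetrically via the strategy prescribing $0$ precisely on $L_0$; both are immediate.
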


\section{Conclusion}

We presented a simple kind of games 
with imperfect information where constructing optimal
strategies requires iterating the (synchronous rational) relation that
correlates the observation of players. This establishes a correspondence between winning strategies in games on
the one hand, and main classes of formal languages on the other hand.
The correspondence leads to several insights on games with imperfect
information. 

Firstly, we obtain simple examples that illustrate
the computational complexity of coordination under imperfect information.
The classical constructions for 
proving that the problem is undecidable in the general case, 
typically involve an unbounded number of 
non-trivial decisions by which the players describe
configurations of a Turing
machine~\cite{PnueliRos90,AzharPetRei01,Schewe2014}.
In contrast, our 
undecidability argument in Theorem~\ref{thm:undecidable} relies on a single simultaneous
decision.  

Secondly, we identify families of games where
optimal strategies exist and can be constructed effectively, 
but the complexity of the strategic decision necessarily grows 
with the length of the play. 
This opens a new perspective for distributed strategy synthesis that
departs from the traditional focus on finite-state
winning strategies and on game classes on which the existence of such is decidable.
In consensus games, the implementation of winning strategies 
requires arbitrary linear-bounded automata in the general
case. However, we also described a structural condition on game
graphs that ensures that winning strategies can be
implemented by push-down automata. 

One challenging objective is to
classify games with imperfect information according to the complexity of
strategies required for solving them.
The insights developed for consensus games allow a few more steps
into this direction. For instance, 
games with one-flower seeds cover
one-counter languages and therefore admit optimal strategies
implemented by one-counter automata. Likewise, we can build up
a variant of $n$-flower seeds from Dyck languages
restricted to palindromes, as illustrated in
Figure~\ref{fig:loose-flower}. Games with seeds of this kind
cover a subclass of linear
languages and hence admit
optimal strategies implemented by one-turn push-down automata.

{
\bibliographystyle{amsplain}
\bibliography{cs}
}

\end{document}